\documentclass[12pt]{iopart}
\usepackage{siunitx}
\newcommand{\dd}{\,\mathrm{d}}

\usepackage{bm}
\usepackage{iopams}

\usepackage{braket}
\usepackage{color}
\usepackage{setstack}
\usepackage{citesort}
\usepackage{cite}
\usepackage{longtable}
\usepackage{caption}
\usepackage{multirow}
\usepackage{graphicx}
\usepackage{ntheorem}
\usepackage{stackrel}
\usepackage{subfigure}
\usepackage{hhline}
\usepackage{epstopdf}
\usepackage[fleqn]{cases}

\usepackage{iopams} 
\newtheorem{proposition}{Proposition}
\newtheorem{lemma}{Lemma}

\newenvironment{proof}[1][Proof]{%
  \par\noindent\textbf{#1.} \hspace*{1ex}%
 \itshape 
}{\hfill$\square$\par}
\newtheorem{theorem}{Theorem}

\begin{document}
\title[Spectral moments of Bures-Hall ensemble]{Spectral moments of Bures-Hall ensemble and applications to entanglement entropy}
\author{Linfeng Wei$^1$, Youyi Huang$^2$, and Lu Wei$^1$}
\address{$^1$Department of Computer Science, Texas Tech University, Texas 79409, USA}
\address{$^2$University of Central Missouri, Warrensburg, Missouri 64093, USA}
\ead{$^1$\{linwei,luwei\}@ttu.edu; $^2$yhuang@ucmo.edu}
\vspace{10pt}
\begin{indented}
\item[]January 2026
\end{indented}

\begin{abstract}
We study spectral moments of the Bures-Hall random matrices ensemble. The main result establishes a recurrence relation for the $k$-th spectral moment valid for a real-valued $k$, in contrast to prevailing results in the literature of different ensembles of assuming an integer $k$. The key to establish the recurrence relation is the obtained Christoffel-Darboux formulas of correlation kernels of the ensemble that avoid tedious summations. As an application of our spectral moment results, we re-derive the formulas of average von Neumann entropy and quantum purity of Bures-Hall ensemble conjectured by Ayana Sarkar and Santosh Kumar. This work is dedicated to the memory of Santosh Kumar.
\end{abstract}

\vspace{2pc}
\noindent{\it Keywords}: Bures-Hall ensemble, quantum entanglement, quantum purity, random matrix theory, spectral moments, von Neumann entropy

\maketitle
\section{Introduction}
For an ensemble $X$ of random matrices of dimension $m$, the $k$-th spectral moment is defined as 
\begin{equation}\label{eq:specmoment}
    \mathbb{E}\left[ \mathrm{Tr}\! \left(X^k\right)\right],
\end{equation}
which has been well studied for classical ensembles including Gaussian, Laguerre and Jacobi unitary ensembles~\cite{1986InMat..85..457H,Haagerup2003RandomMW,Ledoux2004,Ledoux2009ARF,Dubrovin_2017,Cunden_2019,Gisonni_2020,Gisonni_2021,huang2025cumulantstructuresentanglemententropy}, as well as non-Hermitian ensembles~\cite{BYUN2024103526,byun2024qdeformedgaussianunitaryensemble,ByunForrester24realGin,akemann2025spectralmomentscomplexsymplectic,byun2025spectralanalysisqdeformedunitary}. In particular, exact recurrence relations for spectral moments of Gaussian orthogonal ensemble, Gaussian unitary ensemble, and Laguerre unitary ensembles have been studied and obtained in~\cite{1986InMat..85..457H,Haagerup2003RandomMW,Ledoux2004,Ledoux2009ARF,huang2025cumulantstructuresentanglemententropy}. It is worth mentioning that, except for~\cite{Cunden_2019,huang2025cumulantstructuresentanglemententropy}, most of the existing works focus on the spectral moments of an integer order $k$. In this work, we study spectral moments of the Bures-Hall ensemble (\ref{eq:BH}) described below, and deduce a three-term recurrence relation (\ref{specrecrel}) for the spectral moments in terms of a real order $k$. 

Entanglement statistics and related topics have been studied over various ensembles. This includes various impactful contribution by Santosh Kumar~\cite{Kumar2011,Kumar_2015JPA,Kumar_2015PRE,Kumar_2017,Kumar_2019,Sarkar19,Forrester_2019,PhysRevA.102.012405,PhysRevA.103.032423,PhysRevA.104.022438,Forrester2022DifferentialRF,Sarkar2023EntanglementSS,forrester2023computable,LAHA2024129591}. In particular, Sarkar and Kumar conjectured an exact formulae for the mean of von Neumann entropy and quantum purity over Bures-Hall ensemble~\cite{Sarkar19,forrester2025memoriamaspectssantoshkumars}, which was proved using a summation method in~\cite{Wei20BHA}. Further results on variance and skewness of von Neumann entropy and moments of purity over Bures-Hall ensemble have been obtained with the summation method in~\cite{Wei20BH,LW21,wei2025skewnessvonneumannentropy}. Similarly, the moments/cumulants of entanglement statistics over Hilbert-Schmidt ensemble and fermionic Gaussian ensemble that correspond to Laguerre unitary ensemble and Jacobi unitary ensemble have been studied in~\cite{Page93,Foong94,Ruiz95,HLW06,VPO16,Wei17,Wei20,HWC21,Wei_2022,huang2025cumulantstructuresentanglemententropy,forrester2026integrabilityenabledcomputationsrelating} and~\cite{LRV20,LRV21,BHK21,HW22,huang_entropy_2023,HWISIT}, respectively. The existing work and methods treat the calculation of cumulant of linear statistics over these ensembles individually for each order. Moreover, in terms of calculating the higher-order cumulants of von Neumann entropy over the above three ensembles using the existing summation method~\cite{Wei20BH,wei2025skewnessvonneumannentropy,Wei17,Wei20,HWC21,HW22,huang_entropy_2023,HWISIT}, the simplification of nested summation has become an increasingly tedious and case-by-case task. However, in~\cite{huang2025cumulantstructuresentanglemententropy}, a natural and systematic way to calculate higher-order cumulants of entanglement entropy in terms of lower order cumulants was proposed. The key to their approach for Laguerre unitary ensemble is a recurrence relation for the spectral moments in terms of $k\in \mathbb{R}$, and then taking derivative with respect to $k$ for applications to entanglement entropy. In this work, we adopt such an approach to the Bures-Hall ensemble.

The strategy to calculate the moments/cumulants of linear statistics over Bures-Hall ensemble is to utilize correlation kernels of the unconstrained Bures-Hall ensemble~(\ref{eq:BHu}), which are naturally related to the correlation kernels of Cauchy-Laguerre biorthogonal ensemble~\cite{FK16,Bertola14}. A key to the calculation of spectral moments, especially their recurrence relation in terms of real order $k$ is the Christoffel-Darboux formulae of the correlation kernels~\cite{huang2025cumulantstructuresentanglemententropy}, which enables a summation-free representation of the kernels and consequential summation-free calculations. We will derive the Christoffel-Darboux formulas for the correlation kernels of Cauchy-Laguerre biorthogonal ensemble, and utilize such results to derive the recurrence relation for spectral moments over the unconstrained Bures-Hall ensemble~(\ref{eq:BHu}). As an application, we reproduce the results on the mean of entanglement entropy and quantum purity~\cite{Sarkar19,forrester2025memoriamaspectssantoshkumars,Wei20BHA} via the recurrence relation. It is expected that higher-order cumulants of entanglement entropy of Bures-Hall ensemble can also be obtained via higher-order correlators of spectral moments~\cite{Dubrovin_2017,Gisonni_2020,Gisonni_2021}, which may be addressed in a future work.

The bipartite system of the Bures-Hall ensemble~\cite{Hall98,Zyczkowski01,Sommers03,Sommers04,Osipov10,Borot12,FK16,Bertola14,Sarkar19,Wei20BHA,Wei20BH,LW21,wei2025skewnessvonneumannentropy,wei2025averagerelativeentropyrandom} is formulated as follows. For two Hilbert spaces $\mathcal{H}_{A}$ of dimension $m$ and $\mathcal{H}_{B}$ of dimension $n$, the composite system $\mathcal{H}_{A+B}$ is given by the tensor product of the subsystems $\mathcal{H}_{A+B}=\mathcal{H}_{A}\otimes\mathcal{H}_{B}$. A random pure state of the composite system $\mathcal{H}_{A+B}$ is defined as a linear combination of the complete basis $\big\{\Ket{i^{A}}\big\}$ and $\big\{\Ket{j^{B}}\big\}$ of $\mathcal{H}_{A}$ and $\mathcal{H}_{B}$ with random coefficients $z_{i,j}$, 
\begin{equation}
\Ket{\psi}=\sum_{i=1}^{m}\sum_{j=1}^{n}z_{i,j}\Ket{i^{A}}\otimes\Ket{j^{B}},  
\end{equation}
where each $z_{i,j}$ follows the standard complex Gaussian distribution of zero mean and unit variance with the probability constraint $\sum_{i,j}|z_{i,j}|^2=1$. As in~\cite{Sarkar19}, we consider a superposition of the above state as
\begin{equation}\label{eq:SB}
\Ket{\varphi}=\Ket{\psi}+\left(\mathbf{U}\otimes\mathbf{I}_{n}\right)\Ket{\psi},
\end{equation}
where $\mathbf{U}$ is an $m\times m$ unitary random matrix with the measure proportional to $\det^{2\alpha+1}\left(\mathbf{I}_{m}+\mathbf{U}\right)$, where the parameter $\alpha$ takes half-integer values as
\begin{equation}\label{eq:aBH}
\alpha=n-m-\frac{1}{2}.
\end{equation} The corresponding density matrix of the pure state~(\ref{eq:SB}) is
$\rho=\Ket{\varphi}\Bra{\varphi}$,
which follows the natural probability constraint $\tr(\rho)=1$.
Without loss of generality, we assume that $m\leq n$, i.e., the dimension of subsystem $A$ is no more than the dimension of subsystem $B$. The reduced density matrix $\rho_{A}$ of subsystem $A$ is computed by partial tracing of the full density matrix over the other subsystem $B$ as $\rho_{A}=\tr_{B}(\rho)$.
The resulting density of eigenvalues of $\rho_{A}$, $\lambda_{i}\in[0,1]$, $i=1,\dots,m$, is the Bures-Hall measure~\cite{Hall98,Zyczkowski01,Sommers03,Sarkar19}
\begin{equation}\label{eq:BH}
f\left(\bm{\lambda}\right)=\frac{1}{C}~\delta\left(1-\sum_{i=1}^{m}\lambda_{i}\right)\prod_{1\leq i<j\leq m}\frac{\left(\lambda_{i}-\lambda_{j}\right)^{2}}{\lambda_{i}+\lambda_{j}}\prod_{i=1}^{m}\lambda_{i}^{\alpha},
\end{equation}
where the constant $C$ is
\begin{equation}\label{eq:cBH}
C=\frac{2^{-m(2\alpha+m)}\pi^{m/2}}{\Gamma\left(m(2\alpha+m+1)/2\right)}\prod_{i=1}^{m}\frac{\Gamma(i+1)\Gamma(2\alpha+i+1)}{\Gamma(\alpha+i+1/2)}
\end{equation}
with $\delta(x)$ denoting the Dirac delta function and $\Gamma(x)$ being the Gamma function.

To calculate the moments/cumulants of linear statistics over the Bures-Hall ensemble~(\ref{eq:BH}), a standard way is to perform the calculation over the unconstrained Bures-Hall ensemble~\cite{Sarkar19,Wei20BHA,Osipov10,Wei20BH,LW21,wei2025skewnessvonneumannentropy}, whose density function reads 
\begin{equation}\label{eq:BHu}
h\left(\bm{x}\right)=\frac{1}{C'}\prod_{1\leq i<j\leq m}\frac{\left(x_{i}-x_{j}\right)^{2}}{x_{i}+x_{j}}\prod_{i=1}^{m}x_{i}^{\alpha}\e^{-x_{i}},
\end{equation}
where $x_{i}\in[0,\infty)$, $i=1,\dots,m$, and the constant $C'$ is 
\begin{eqnarray}
    C'=C~\Gamma\left(d\right)
\end{eqnarray} with 
\begin{equation}\label{eq:d}
d=\frac{1}{2}m\left(2\alpha+m+1\right). 
\end{equation}
Such unconstrained ensemble is related to the original Bures-Hall ensemble~(\ref{eq:BH}) by the factorization~\cite{Wei20BHA,Wei20BH,LW21,wei2025skewnessvonneumannentropy}
\begin{equation}\label{eq:g2ft}
h(\bm{x})\prod_{i=1}^{m}\dd x_{i}=f(\bm{\lambda})g(\theta)\dd\theta\prod_{i=1}^{m}\dd\lambda_{i},
\end{equation}
where 
\begin{equation}
g(\theta)=\frac{1}{\Gamma\left(d\right)}\e^{-\theta}\theta^{d-1}
\end{equation}is the density of trace of the unconstrained ensemble 
\begin{equation}\label{eq:tr}
\theta=\sum_{i=1}^{m}x_{i}~~~~\theta\in[0,\infty).
\end{equation}
The above factorization leads to a moment conversion between the moments of linear statistics over the unconstrained Bures-Hall ensemble~(\ref{eq:BHu}) and those over the Bures-Hall ensemble~(\ref{eq:BH})~\cite{Wei20BHA,Wei20BH,LW21,wei2025skewnessvonneumannentropy}, which vary between different linear statistics. In particular, we consider a family of linear statistics over the unconstrained Bures-Hall ensemble~(\ref{eq:BHu}),
\begin{equation}
    R_k=\sum_{i=1}^{m}x_i^k
\end{equation}
for $k \in \mathbb{R}$. Clearly, the first cumulant of the linear statistics $R_k$ is the $k$-th spectral moment over the unconstrained Bures-Hall ensemble~(\ref{eq:BHu}),
\begin{eqnarray}\label{eq:Rk}
    \kappa(R_k) =  \mathbb{E}\left[ \sum_{i=1}^{m} x_i^k\right] .
\end{eqnarray}
Also, we define a family of linear statistics over the unconstrained Bures-Hall ensemble~(\ref{eq:BHu}),
\begin{equation}\label{eq:Tk}
    T_k=\sum_{i=1}^{m}x_i^k\ln x_i
\end{equation}
for $k \in \mathbb{R}$, which will be useful in the application to the entanglement entropy.

The rest of the paper is organized as follows. In Section~\ref{sec:Theory}, we derive relevant technical results on Cauchy-Laguerre biorthogonal kernels and deduce their Christoffel-Darboux formulas. We then derive a recurrence relation for spectral moments summarized in Theorem~\ref{Theorem:Rec}, which is the main result of this work. Section~\ref{App} is devoted to applications to von Neumann entropy and quantum purity. Coefficients of various obtained formulas are in~\ref{appendix:A}.

\section{Cauchy-Laguerre biorthogonal kernels and recurrence relation for spectral moments}\label{sec:Theory}
\subsection{Christoffel-Darboux formulas for Cauchy-Laguerre biorthogonal kernels}\label{sec:cd}
The $k$-point ($k\leq m$) correlation function $\rho_{k}(x_{1},\dots,x_{k})$ of the unconstrained Bures-Hall ensemble is known to follow a Pfaffian point process of a $2k\times2k$ antisymmetric matrix~\cite{FK16}, where the corresponding correlation kernels are related to the Cauchy-Laguerre biorthogonal kernels~\cite{Bertola14}. In this section, we derive some technical results of the Cauchy-Laguerre biorthogonal kernels and their polynomials. We will first state known results on the structure relations and four-term expressions of Cauchy-Laguerre biorthogonal polynomials in Lemma \ref{lemma:strucrel} and Lemma \ref{lemma:fourtermrec}, respectively, which are used in the derivation and simplification of derivative of Cauchy-Laguerre biorthogonal kernels. We then derive the key Christoffel-Darboux formulas for the Cauchy-Laguerre biorthogonal kernels in Proposition \ref{Prop:CD}. In Proposition \ref{Prop:DerivK}, we adopt the Christoffel-Darboux formulas to derive the derivative of one-point correlation kernel with simple representations. The summation-free representation of derivative of one-point kernel enables us to perform the integration by parts to derive the recurrence relation (\ref{specrecrel}) for spectral moments (\ref{eq:Rk}), which is presented in Theorem \ref{Theorem:Rec}.
 
The computation of moments/cumulants of linear statistics over the (unconstrained) Bures-Hall ensemble can be performed over the kernels of the Cauchy-Laguerre ensemble, where the four correlation kernels~\cite{FK16, LW21} are $K_{00}(x,y)$, $K_{01}(x,y)$, $K_{10}(x,y)$, and $K_{11}(x,y)$, which admit finite summation representation~\cite{FK16,Bertola14} as below.
\begin{subequations}\label{kernels}
\begin{eqnarray}
    K_{00}(x,y)&=&\sum _{k=0}^{m-1} \frac{1}{h_k}p_k(x) q_k(y) \label{eq:00}\\
    K_{01}(x,y)&=&-x^{\alpha }e^{-x}\sum _{k=0}^{m-1}\frac{1}{h_k} p_k(y) Q_k(-x) \label{eq:01}\\
   K_{10}(x,y)&=&-y^{\alpha +1}e^{-y}\sum _{k=0}^{m-1} \frac{1}{h_k}P_k(-y) q_k(x) \label{eq:10}\\
    K_{11}(x,y)&=&x^{\alpha }y^{\alpha +1} e^{-x-y}  \sum _{k=0}^{m-1}\frac{1}{h_k} P_k(-y) Q_k(-x)-W(x,y). \label{eq:11}
    \end{eqnarray}
    \end{subequations}
    Here, $p_{k}(x)$ and $q_{k}(y)$ are Cauchy-Laguerre biorthogonal polynomials,
\begin{equation}\label{eq:oc}
\int_{0}^{\infty}\!\!\int_{0}^{\infty}p_{k}(x)q_{l}(y)W(x,y)\dd x\dd y=h_k\delta_{k,l}
\end{equation}
with the normalizing constant $h_k$ and weight function 
\begin{equation}
    W(x,y)= \frac{x^{\alpha } y^{\alpha +1} e^{-x-y}}{x+y}.
\end{equation}
Also, the Cauchy transforms of $p_{k}(x)$ and $q_{k}(y)$ are respectively
\begin{subequations}
\begin{eqnarray}\label{eq:PQ}
P_{k}(x)&=&\int_{0}^{\infty}\frac{v^{\alpha}\e^{-v}}{x-v}p_{k}(v)\dd v \\
Q_{k}(y)&=&\int_{0}^{\infty}\frac{w^{\alpha+1}\e^{-w}}{y-w}q_{k}(w)\dd w.
\end{eqnarray}
\end{subequations}
Their representations via Meijer G-functions are also known~\cite{Bertola10,Bertola14,Forrester}
\begin{subequations}\label{eq:kerM}
\begin{eqnarray}
p_{k}(x)&=&\sqrt{2}(-1)^{k}G_{2,3}^{1,1}\left(\!\begin{array}{c}-2\alpha-1-k;k+1\\0;-\alpha,-2\alpha-1\end{array}\Big|x\Big.\right)\\
q_{k}(x)&=&\sqrt{2}(-1)^{k}(\alpha+k+1)G_{2,3}^{1,1}\left(\!\begin{array}{c}-2\alpha-1-k;k+1\\0;-\alpha-1,-2\alpha-1\end{array}\Big|x\Big.\right)\\
P_{k}(x)&=&\sqrt{2}(-1)^{k+1}\e^{-x}G_{2,3}^{2,1}\left(\!\begin{array}{c}-\alpha-k-1;\alpha+k+1\\0,\alpha;-\alpha-1\end{array}\Big|-\!x\Big.\right)\\
Q_{k}(x)&=&\sqrt{2}(-1)^{k+1}(\alpha+k+1)\e^{-x} G_{2,3}^{2,1}\left(\!\begin{array}{c}-\alpha-k;\alpha+k+2\\0,\alpha+1;-\alpha\end{array}\Big|-\!x\Big.\right). 
\end{eqnarray}
\end{subequations}
Moreover, explicit representation of the Cauchy-Laguerre biorthogonal polynomials are available from definition of Meijer G-functions~\cite{Prudnikov86},
\begin{subequations}\label{pqpoly}
\begin{eqnarray}
\fl    p_m(x)&=&  \sqrt{2} (-1)^m \sum _{j=0}^m \frac{(-1)^j  \Gamma (2 \alpha +j+m+2)x^j}{j! \Gamma (\alpha +j+1) \Gamma (2 \alpha +j+2) \Gamma (-j+m+1)}\label{eq:pexplicit}\\
\fl  q_m(x)&=&  \sqrt{2} (-1)^m (\alpha +m+1)\sum _{j=0}^m \frac{(-1)^j  \Gamma (2 \alpha +j+m+2)x^j}{j! \Gamma (\alpha +j+2) \Gamma (2 \alpha +j+2) \Gamma (-j+m+1)}.\label{eq:qexplicit}
\end{eqnarray} 
\end{subequations}
Another important ingredient in the results below for Cauchy-Laguerre biorthogonal kernels is the highest order coefficient of the biorthogonal polynomials, which is the same for both $p_m(x)$ and $q_m(x)$, and we denote as $S_m$, i.e.,
\begin{eqnarray}
    p_m(x)&=&S_mx^m+\mbox{lower-order terms}\\
    q_m(x)&=&S_mx^m+\mbox{lower-order terms},
\end{eqnarray}
where 
\begin{eqnarray}
    S_m=\frac{2 \sqrt{2} (2 \alpha +2 m+1) \Gamma (2 \alpha+2 m )}{m (2 \alpha +m) (2 \alpha +m+1) \Gamma (m) \Gamma (\alpha+m ) \Gamma (2 \alpha+m )}.
\end{eqnarray}

The spectral moments of unconstrained Bures-Hall ensemble can be calculated over the one point density as 
\begin{eqnarray}
\kappa(R_k)&=& m\int_{0}^{\infty}x^k~h_{1}(x)\dd x,\label{eq:Rk}
\end{eqnarray}
where the density function $h_1(x)$ is~\cite{LW21,FK16}
\begin{eqnarray}\label{h1}
h_{1}(x)&=&\frac{1}{2m}\left(K_{01}(x,x)+K_{10}(x,x)\right).\label{eq:h1}
\end{eqnarray}

We first state some known results from \cite{witte2022gapprobabilitiesbureshallensemble}.
\begin{subequations}\label{lemmasquare}
   \begin{eqnarray}
        &&\int_0^{\infty}\int_0^{\infty} \frac{x^\alpha y^{\alpha+1} e^{-x-y} }{(x+y)^2} x p_m(x) q_n(y)\dd x \dd y\nonumber\\
        &=&\left\{\begin{array}{rr}
-\frac{S_m}{S_{m+1}} ,& n=m+1\\
-\frac{S_{m-1}}{S_m}+\frac{S_m}{S_{m+1}} ,& n=m \\
\frac{S_{m-1}}{S_m} ,& n=m-1\\
0 ,& \mbox{otherwise} \\
\end{array}\right.
    \end{eqnarray}
    \begin{eqnarray}
&&\int_0^{\infty}\int_0^{\infty} \frac{x^\alpha y^{\alpha+1} e^{-x-y} }{(x+y)^2}  p_m(x) y q_n(y)\dd x \dd y\nonumber\\
        &=&\left\{\begin{array}{rr}
\frac{S_m}{S_{m+1}} ,& n=m+1\\
\frac{S_{m-1}}{S_m}-\frac{S_m}{S_{m+1}}+1 ,& n=m \\
-\frac{S_{m-1}}{S_m} ,& n=m-1\\
0, & \mbox{otherwise} \\
\end{array}\right.
    \end{eqnarray} 
\end{subequations}
The above results can be derived using integration by parts as
\begin{eqnarray}
&&\int_0^{\infty}\int_0^{\infty} \frac{x^\alpha y^{\alpha+1} e^{-x-y} }{(x+y)^2} x p_m(x) q_n(y)\dd x \dd y\nonumber\\
&=&\int_{0}^{\infty}\int_{0}^{\infty}x\left(\frac{\dd}{\dd x}p_m(x)\right)q_n(y)\frac{x^\alpha y^{\alpha+1} e^{-x-y}}{x+y} \dd x\dd y+(\alpha+1)\delta_{m,n}\nonumber\\
&&-\int_{0}^{\infty}\int_{0}^{\infty}x p_m(x)q_n(y)\frac{x^\alpha y^{\alpha+1} e^{-x-y}}{x+y} \dd x\dd y,
\end{eqnarray}
and 
\begin{eqnarray}
&&\int_0^{\infty}\int_0^{\infty} \frac{x^\alpha y^{\alpha+1} e^{-x-y} }{(x+y)^2} x p_m(x) y q_n(y)\dd x \dd y\nonumber\\
&=&\int_{0}^{\infty}\int_{0}^{\infty}p_m(x)y\left(\frac{\dd}{\dd y}q_n(y)\right)\frac{x^\alpha y^{\alpha+1} e^{-x-y}}{x+y} \dd x\dd y+(\alpha+2)\delta_{m,n}\nonumber\\
&&-\int_{0}^{\infty}\int_{0}^{\infty} p_m(x)y q_n(y)\frac{x^\alpha y^{\alpha+1} e^{-x-y}}{x+y} \dd x\dd y.
\end{eqnarray}
Structure relations for $p_m(x)$, $q_m(x)$, $P_m(x)$, $Q_m(x)$ are also obtained in~\cite{witte2022gapprobabilitiesbureshallensemble} as follows. 
\begin{lemma}\label{lemma:strucrel}
\begin{subequations}\label{eq:strucrel}
    \begin{eqnarray}\label{eq:strucp}
  \fl ~~~~~~     x\frac{\dd}{\dd x}p_m(x) &=&\frac{S_{m-1}}{S_m}p_{m-1}(x)-\left(\alpha +1-\frac{S_m}{S_{m+1}}+\frac{S_{m-1}}{S_m}\right)p_m(x)\nonumber\\ \fl &&
        +x p_m(x)-\frac{S_m }{S_{m+1}}p_{m+1}(x)
    \end{eqnarray}
        \begin{eqnarray}\label{eq:strucq}
\fl ~~~~~~ x\frac{\dd}{\dd x}q_m(x) &=&\frac{S_{m-1}}{S_m}q_{m-1}(x)-\left(\alpha +1+\frac{S_m}{S_{m+1}}-\frac{S_{m-1}}{S_m}\right)q_m(x)\nonumber\\ \fl &&
       +x q_m(x)-\frac{S_m }{S_{m+1}}q_{m+1}(x)
    \end{eqnarray}
    
\begin{eqnarray}\label{eq:strucP}
  \fl  ~~~~~~    x\frac{\dd}{\dd x}P_m(x) &=& \frac{S_{m-1}}{S_m}P_{m-1}(x)
            -\left(1-\frac{S_m}{S_{m+1}}+\frac{S_{m-1}}{S_m}\right)P_m(x)-\frac{S_m }{S_{m+1}}P_{m+1}(x)
    \end{eqnarray}
    
\begin{eqnarray}\label{eq:strucQ}
 \fl  ~~~~~~     x\frac{\dd}{\dd x} Q_m(x)&=& \frac{S_{m-1}}{S_m}Q_{m-1}(x)-\left(\frac{S_m}{S_{m+1}}-\frac{S_{m-1}}{S_m}\right)Q_m(x)-\frac{S_m }{S_{m+1}}Q_{m+1}(x).
    \end{eqnarray}       
\end{subequations}
\end{lemma}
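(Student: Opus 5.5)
We will prove the four structure relations by expanding each derivative in the biorthogonal basis and reading off the coefficients from the double-integral identities (\ref{lemmasquare}). Start with (\ref{eq:strucp}). The function $x\frac{\dd}{\dd x}p_m(x)-xp_m(x)$ is a polynomial of degree $m+1$. Its leading coefficient is $-S_m$, which equals $-\frac{S_m}{S_{m+1}}$ times the leading coefficient of $p_{m+1}$. Since $\{p_j\}_{j\le m+1}$ is a basis of polynomials of degree at most $m+1$, we can write
\[
x p_m'(x)-xp_m(x)=\sum_{j=0}^{m+1}c_j\,p_j(x),\qquad c_j=\frac{1}{h_j}\int_0^\infty\!\!\int_0^\infty\big(xp_m'(x)-xp_m(x)\big)q_j(y)W(x,y)\dd x\dd y ,
\]
using the biorthogonality (\ref{eq:oc}). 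The integration-by-parts identity displayed just before the lemma rewrites $c_j$ in closed form as $\frac{1}{h_j}\big[\int\!\!\int \frac{x^\alpha y^{\alpha+1}e^{-x-y}}{(x+y)^2}xp_m(x)q_j(y)\dd x\dd y-(\alpha+1)\delta_{m,j}\big]$. The first part of (\ref{lemmasquare}) then gives $c_j=0$ for $j<m-1$, together with the three explicit coefficients $\frac{S_{m-1}}{S_m}$, $-(\alpha+1-\frac{S_m}{S_{m+1}}+\frac{S_{m-1}}{S_m})$ and $-\frac{S_m}{S_{m+1}}$, up to the normalisation $h_j$. This is exactly (\ref{eq:strucp}). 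The relation (\ref{eq:strucq}) follows in the same way: expand in $q_j$, integrate by parts in $y$, and use the second part of (\ref{lemmasquare}). In this case the extra $+1$ in the diagonal entry, combined with $(\alpha+2)\delta_{m,n}$, produces the constant $\alpha+1$.

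The main bookkeeping issue is the normalisation. The identities (\ref{lemmasquare}) as stated carry no factors $h_j$, so they implicitly assume $h_j=1$. We will verify this from the Meijer-G normalisation (\ref{eq:kerM}), or else divide through consistently and check that the ratios $S_{m\pm1}/S_m$ come out as stated. As a cross-check we will compare the leading coefficient and the constant term against the explicit formulas (\ref{pqpoly}).

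For (\ref{eq:strucP}) we transfer (\ref{eq:strucp}) to the Cauchy transform instead of differentiating under the integral directly. Substituting $v=xt$ gives
\[
P_m(x)=\int_0^\infty\frac{v^\alpha e^{-v}p_m(v)}{x-v}\dd v,
\]
and applying $x\frac{\dd}{\dd x}$ to this is equivalent to applying the operator $-\frac{\dd}{\dd v}v$ to $v^{\alpha}e^{-v}p_m(v)$ inside the integral. Indeed $x\partial_x\frac{1}{x-v}=-v\partial_v\frac{1}{x-v}-\frac{1}{x-v}$, and we then integrate by parts in $v$. The boundary terms vanish for $\alpha>-1$. This yields
\[
x P_m'(x)=\int\frac{v^\alpha e^{-v}\big(vp_m'(v)+(\alpha+1)p_m(v)-vp_m(v)\big)}{x-v}\dd v-P_m(x)+(\text{boundary})=\ldots
\]
Inserting (\ref{eq:strucp}), the term $(\alpha+1)p_m$ and the term $-vp_m$ cancel against the corresponding pieces of (\ref{eq:strucp}). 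What remains is $\frac{S_{m-1}}{S_m}P_{m-1}-(1-\frac{S_m}{S_{m+1}}+\frac{S_{m-1}}{S_m})P_m-\frac{S_m}{S_{m+1}}P_{m+1}$, where the $-P_m$ coming from the commutator supplies the constant $1$.

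The same procedure applied to $Q_m$, with weight $w^{\alpha+1}$ and relation (\ref{eq:strucq}), gives (\ref{eq:strucQ}). Here the constant $-(\alpha+2)$ coming from integration by parts combines with $-(\alpha+1)-1$ from (\ref{eq:strucq}) so that the diagonal term reduces to $-(\frac{S_m}{S_{m+1}}-\frac{S_{m-1}}{S_m})$.

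The hardest step is keeping the signs and constants straight in this transfer. We will verify each constant by checking the large-$x$ asymptotics of $P_m$: since $P_m(x)\sim x^{-m-1}\int v^{m+\alpha}e^{-v}p_m(v)\dd v$, the orders on both sides must match, which pins down the diagonal coefficient independently.
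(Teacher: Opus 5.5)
Your argument is correct and is essentially the paper's route: you expand in the biorthogonal basis, integrate by parts, read off the coefficients from (\ref{lemmasquare}), and then transfer to $P_m,Q_m$ by integrating by parts against the Cauchy kernel. Expanding $xp_m'-xp_m$ and using the first identity of (\ref{lemmasquare}) is slightly tidier bookkeeping than the paper's expansion of $xp_m'$ through the $y/(x+y)^2$ form and $c_{m,m}$, and $h_j=1$ is indeed the paper's implicit normalisation.

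One side remark is false and should be corrected. By (\ref{eq:pint}) one has $P_m(x)\sim\sqrt{2}/x$ for every $m$, not $x^{-m-1}$, since the $p_m$ are not orthogonal to monomials under $v^\alpha e^{-v}$. With that asymptotic, and $Q_m(x)\sim\sqrt{2}(\alpha+m+1)/x$, your proposed cross-check does confirm the diagonal coefficients.
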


\begin{proof}
    Let $a_{m,j}$, $b_{m,j}$ be two set of numbers such that 
    \begin{subequations}
        \begin{equation}
            x\frac{\dd}{\dd x}p_m(x)= \sum_{j=0}^{m}a_{m,j}p_j(x)
        \end{equation}
         \begin{equation}
            x\frac{\dd}{\dd x}q_m(x)= \sum_{j=0}^{m}b_{m,j}q_j(x).
        \end{equation}
    \end{subequations}
     Similar to the proof of Lemma \ref{prop:cjkdkj} below, we have      
      \begin{subequations}
     \begin{equation}
a_{m,j}=\int_{0}^{\infty}\int_{0}^{\infty}x\left(\frac{\dd}{\dd x}p_m(x)\right)q_j(y)\frac{x^\alpha y^{\alpha+1} e^{-x-y}}{x+y} \dd x\dd y
     \end{equation}
     \begin{equation}
b_{m,j}=\int_{0}^{\infty}\int_{0}^{\infty}y\left(\frac{\dd}{\dd y}q_m(y)\right)p_j(x)\frac{x^\alpha y^{\alpha+1} e^{-x-y}}{x+y} \dd x\dd y.
     \end{equation}
\end{subequations}
On the other hand, do integration by parts and we have 
  \begin{eqnarray}
     &&\!\!\!\!\!\!\!\! \int_{0}^{\infty}\int_{0}^{\infty}x\left(\frac{\dd}{\dd x}p_m(x)\right)q_j(y)\frac{x^\alpha y^{\alpha+1} e^{-x-y}}{x+y} \dd x\dd y\nonumber\\
      &\!\!\!\!\!\!\!\!=&\!\! -\!\!\int_{0}^{\infty}\!\!\!\!\int_{0}^{\infty}\!\!p_m(x)q_j(y)\left( \frac{ x^\alpha y^{\alpha+2} e^{-x-y}}{(x+y)^2}+\frac{x}{x+y}\frac{\dd}{\dd x}(x^\alpha y^{\alpha+1} e^{-x-y}) \right)\!\dd x\!\dd y.  
  \end{eqnarray}
   Hence we have     
   \begin{equation}
        a_{m,j}=-\int_{0}^{\infty}\int_{0}^{\infty}p_m(x)q_j(y)x^\alpha y^{\alpha+1} e^{-x-y}\left(\frac{y}{(x+y)^2}+\frac{\alpha-x}{x+y}\right),
   \end{equation}
   and similarly 
   \begin{equation}
        b_{m,j}=-\int_{0}^{\infty}\int_{0}^{\infty}p_j(x)q_n(y)x^\alpha y^{\alpha+1} e^{-x-y}\left(\frac{x}{(x+y)^2}+\frac{\alpha+1-y}{x+y}\right).
   \end{equation}
   Insert the results (\ref{lemmasquare}) for the $\frac{1}{(x+y)^2}$, and evaluate the $\frac{1}{x+y}$ term with highest order coefficients below (\ref{eq:cmm-1}), (\ref{eq:dmm-1}), we complete the proof of (\ref{eq:strucp}), (\ref{eq:strucq}).
   
Integrate over $x\in[0,\infty)$ on the both sides of (\ref{eq:strucp}) and (\ref{eq:strucq}) multiplied by 
\begin{eqnarray}
      \frac{1}{z-x}x^\alpha e^{-x}
      \end{eqnarray}
      and 
      \begin{eqnarray}
        \frac{1}{z-x}x^{\alpha+1} e^{-x},
\end{eqnarray} respectively, and do integration by parts with 
\begin{equation}
        \frac{\dd}{\dd z} \frac{1}{z-x}=-\frac{\dd}{\dd x} \frac{1}{z-x} 
    \end{equation}
    for the left-hand sides, we have consequently the results (\ref{eq:strucP}) and (\ref{eq:strucQ}), respectively.
    \end{proof}

We also need the following recurrence relations derived in~\cite{witte2022gapprobabilitiesbureshallensemble}.
\begin{lemma}\label{lemma:fourtermrec}
\begin{subequations}\label{eq:fourtermrec}
    \begin{eqnarray}\label{precrel}
        &&x \left(p_{m+1}(x)-p_m(x)\right)\nonumber\\
        &=&r_{m,2}p_{m+2}(x)+r_{m,1}p_{m+1}(x)+r_{m,0}p_{m}(x)+r_{m,-1}p_{m-1}(x)
    \end{eqnarray}
    
\begin{eqnarray}\label{qrecrel}
      &&x\left(\frac{q_{m+1}(x)}{\alpha+m+2} -\frac{q_m(x)}{\alpha+m+1} \right)\nonumber\\
      &=&s_{m, 2} q_{m+2}(x)+s_{m, 1} q_{m+1}(x)+s_{m, 0} q_m(x)+s_{m,-1} q_{m-1}(x)
\end{eqnarray}  

    \begin{eqnarray}\label{Precrel}
        &&x \left(P_{m+1}(x)-P_m(x)\right)\nonumber\\
        &=&r_{m,2}P_{m+2}(x)+r_{m,1}P_{m+1}(x)+r_{m,0}P_{m}(x)+r_{m,-1}P_{m-1}(x)
    \end{eqnarray}
    
\begin{eqnarray}\label{Qrecrel}
      &&x\left(\frac{Q_{m+1}(x)}{\alpha+m+2} -\frac{Q_m(x)}{\alpha+m+1} \right)\nonumber\\
      &=&s_{m, 2} Q_{m+2}(x)+s_{m, 1} Q_{m+1}(x)+s_{m, 0} Q_m(x)+s_{m,-1} Q_{m-1}(x),
\end{eqnarray}
\end{subequations}

\begin{eqnarray*}
  \fl r_{m, 2}\!&=&\frac{S_{m+1}}{S_{m+2} }=\frac{(m+2) (2 \alpha +m+3)}{ 2(2 \alpha +2 m+5)} \\
  \fl r_{m, 1}\!&=&\frac{4 \alpha ^2+14 (\alpha +1)+6 \alpha  m+m (3 m+13)}{2(2 \alpha +2 m+5)} \\
   \fl r_{m, 0}\!&=&\frac{4 \alpha ^2+3 m^2+6 \alpha  (m+1)+5 m+2}{2(2 \alpha +2 m+1)} \\
  \fl r_{m,-1}\!&=&\frac{S_{m-1}}{S_m}=\frac{ m (2 \alpha +m+1)}{2(2 \alpha +2 m+1)} \\
  \fl s_{m, 2}\!&=&\frac{S_{m+1}}{S_{m+2} (\alpha+m+2)}=\frac{(m+2) (2 \alpha +m+3)}{2  (\alpha +m+2) (2 \alpha +2 m+5)}\\
 \fl s_{m, 1}\!&=&\frac{4 \alpha ^3+2 \alpha ^2 (5 m+11)+\alpha  (m+2) (9 m+17)+(m+1) (m+2) (3 m+8)}{2 (\alpha +m+1) (\alpha +m+2) (2 \alpha +2 m+5)}\\
 \fl s_{m, 0}\!&=&\frac{2 (\alpha +1)^2 (2 \alpha +1)+3 m^3+(9 \alpha +10) m^2+(2 \alpha +3) (5 \alpha +3) m}{2 (\alpha +m+1) (\alpha +m+2) (2 \alpha +2 m+1)}\\
 \fl s_{m,-1}\!&=&\frac{S_{m-1}}{S_m (\alpha+m+1)}=\frac{m (2 \alpha +m+1)}{2  (\alpha +m+1) (2 \alpha +2 m+1)} .
\end{eqnarray*}
\end{lemma}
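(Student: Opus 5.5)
The plan is to prove the polynomial identities (\ref{precrel}) and (\ref{qrecrel}) by expansion in the biorthogonal basis, and then to transfer them to the Cauchy transforms (\ref{Precrel}) and (\ref{Qrecrel}) by integration.

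For (\ref{precrel}), the left-hand side $x(p_{m+1}(x)-p_m(x))$ is a polynomial of degree $m+2$. We expand it as $\sum_{j=0}^{m+2} c_j p_j(x)$, where by (\ref{eq:oc})
\[
c_j h_j=\int_0^\infty\!\!\int_0^\infty x\left(p_{m+1}(x)-p_m(x)\right)q_j(y)W(x,y)\dd x\dd y .
\]
The key observation is that multiplication by $x$ against the Cauchy weight can be traded for multiplication by $y$ through the identity
\[
\frac{x}{x+y}=1-\frac{y}{x+y}.
\]
This gives
\[
\int\!\!\int x\,p_k(x)q_j(y)W\dd x\dd y=\int_0^\infty x^\alpha e^{-x}p_k(x)\dd x\int_0^\infty y^{\alpha+1}e^{-y}q_j(y)\dd y-\int\!\!\int p_k(x)\,y q_j(y)W\dd x\dd y .
\]
The last integral vanishes for $j<k-1$, since $yq_j$ has degree $j+1<k$. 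The first term is a rank-one product, so it does not vanish by itself. Taking the difference $p_{m+1}-p_m$ is what should cure this: we will check from (\ref{eq:pexplicit}) that
\[
\int_0^\infty x^\alpha e^{-x}p_k(x)\dd x
\]
is independent of $k$, after the normalisation built into $p_k$. A Gamma-function sum evaluation (Chu–Vandermonde) gives this. With it, the rank-one term cancels, and $c_j=0$ for $j<m-1$. This leaves the four-term form.

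The coefficients then follow in turn. The top coefficient $r_{m,2}$ comes from comparing leading coefficients, $S_{m+1}/S_{m+2}$. The bottom coefficient $r_{m,-1}$ comes from pairing with $q_{m-1}$: it equals $-\int\!\!\int p_m\,y q_{m-1}W/h_{m-1}$, which is again a leading-coefficient ratio, $S_{m-1}/S_m$, because $yq_{m-1}=(S_{m-1}/S_m)q_m+\dots$ and $h_j$ is fixed by the normalisation. The middle coefficients $r_{m,1}$ and $r_{m,0}$ we will obtain most cheaply by comparing the coefficients of $x^{m+1}$ and $x^m$ using (\ref{eq:pexplicit}). Alternatively, we can evaluate the identity at $x=0$ together with one subleading coefficient.

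The same argument applies to (\ref{qrecrel}), with the roles of the weights $x^\alpha$ and $y^{\alpha+1}$ swapped. Here the rank-one integral is $\int y^{\alpha+1}e^{-y}q_k\dd y$. From (\ref{eq:qexplicit}) we expect it to be proportional to $(\alpha+k+1)$, which is why the combination $q_{m+1}/(\alpha+m+2)-q_m/(\alpha+m+1)$ appears on the left-hand side.

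For (\ref{Precrel}), we multiply (\ref{precrel}), written in the variable $v$, by $v^\alpha e^{-v}/(x-v)$ and integrate over $v$. We use $v=x-(x-v)$: the left-hand side becomes $x(P_{m+1}-P_m)$ minus $\int v^\alpha e^{-v}(p_{m+1}-p_m)\dd v$, and this last integral is zero by the same $k$-independence established above. Likewise, (\ref{Qrecrel}) follows from (\ref{qrecrel}) using the weight $w^{\alpha+1}e^{-w}$.

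The main obstacle is the explicit evaluation of the moments $\int x^\alpha e^{-x}p_k\dd x$ and $\int y^{\alpha+1}e^{-y}q_k\dd y$ via hypergeometric summation, since the whole four-term truncation rests on their exact cancellation. Extracting $r_{m,1}$, $r_{m,0}$, $s_{m,1}$ and $s_{m,0}$ afterwards is routine but tedious algebra.
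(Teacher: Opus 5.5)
Your proposal is correct and follows the paper's route. The four-term truncation is the paper's own mechanism: $c_{j,k}+d_{k,j}=2(\alpha+j+1)$ with $d_{k,j}=0$ for $j\le k-2$, driven by $\int_0^\infty x^\alpha e^{-x}p_k(x)\dd x=\sqrt2$ and $\int_0^\infty y^{\alpha+1}e^{-y}q_k(y)\dd y=\sqrt2(\alpha+k+1)$. Your transfer to $P_k,Q_k$ via $v=x-(x-v)$ is a more direct version of the paper's detour through $\tilde P_{k-1}$, $f_1$ and $P_k=f_1p_k-\tilde P_{k-1}$, and it rests on the same cancellation.

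There is one small sign slip. After the rank-one parts cancel, pairing with $q_{m-1}$ gives $r_{m,-1}h_{m-1}=+\int\!\!\int p_m(x)\,yq_{m-1}(y)W(x,y)\dd x\dd y$, not minus. This equals $S_{m-1}/S_m$ only once you use $h_m=h_{m-1}$, which holds because $h_k=1$ in this normalisation, as the paper's formula for $c_{j,k}$ implicitly assumes.
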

\begin{proof}
The results~(\ref{precrel}), (\ref{qrecrel}) are available in~\cite{witte2022gapprobabilitiesbureshallensemble} and can be obtained from the results on $c_{j,k}$, $d_{k,j}$ in~(\ref{eq:cjk}), (\ref{eq:dkj}) below. We give proof of (\ref{Precrel}) from (\ref{precrel}), where (\ref{Qrecrel}) from (\ref{qrecrel}) is similar.
    We have from (\ref{precrel}) \begin{eqnarray}
      &&z \left(p_{m+1}(z)-p_m(z)\right)\nonumber\\
      &=&r_{m,2}p_{m+2}(z)+r_{m,1}p_{m+1}(z)+r_{m,0}p_{m}(z)+r_{m,-1}p_{m-1}(z).
\end{eqnarray}
Hence we have 
\begin{eqnarray}\label{eq:pzx}
  &&   z \left(p_{m+1}(z)-p_{m+1}(x)+p_{m+1}(x)-(p_m(z)-p_{m}(x)+p_{m}(x))\right)\nonumber\\
  &&-x \left(p_{m+1}(x)-p_m(x)\right)\nonumber\\
& =&r_{m,2}(p_{m+2}(z)-p_{m+2}(x))+r_{m,1}(p_{m+1}(z)-p_{m+1}(x))\nonumber\\&&+r_{m,0}(p_{m}(z)-p_{m}(x))+r_{m,-1}(p_{m-1}(z)-p_{m-1}(x)).
\end{eqnarray}
 Set\begin{equation}\label{eq:ptilde}
        \tilde{P}_{k-1}(z)= \int_0^{\infty}\frac{p_k(z)-p_k(x)}{z-x}x^\alpha e^{-x}\dd x,
    \end{equation}
    and \begin{equation}\label{eq:f1}
        f_1(z)= \int_0^{\infty}\frac{1}{z-x}x^\alpha e^{-x}\dd x.
    \end{equation}
    Multiplying 
\begin{equation}
    \frac{1}{z-x}x^\alpha e^{-x}
\end{equation}
and integrating over $x\in[0,\infty)$ on both sides of (\ref{eq:pzx}), we have
\begin{eqnarray}
      &&z\tilde{P}_{m}(z)-z\tilde{P}_{m-1}(z)+\int_0^{\infty}p_{m+1}(x)x^\alpha e^{-x} \dd x-\int_0^{\infty}p_{m}(x)x^\alpha e^{-x} \dd x\nonumber\\
      &=&r_{m,2}\tilde{P}_{m+1}(z)+r_{m,1}\tilde{P}_{m}(z)+r_{m,0}\tilde{P}_{m-1}(z)+r_{m,-1}\tilde{P}_{m-2}(z),
\end{eqnarray}
where the last two integrals one the left-hand side cancel with each other from the fact of~(\ref{eq:pint}) below.

We now invoke the result in~\cite{witte2022gapprobabilitiesbureshallensemble} that $P_k(z)$ can be written in terms of $p_k(z)$, $f_1(z)$, $\tilde{P}_{k-1}(z)$ as
\begin{equation}\label{eq:pkptilde}
P_k(z)=f_1(z)p_k(z)-\tilde{P}_{k-1}(z).
\end{equation}
The above leads to
\begin{eqnarray}\label{eq:zPintermid}
    &&z \left(f_1(z)p_{m+1}(z)-f_1(z)p_{m}(z)\right)    - z \left(P_{m+1}(z)-P_{m}(z)\right)\nonumber\\&=&r_{m,2}((f_1(z)p_{m+2}(z)-P_{m+2}(z))+r_{m,1}((f_1(z)p_{m+1}(z)-P_{m+1}(z))\nonumber\\&&+r_{m,0}((f_1(z)p_{m}(z)-P_{m}(z))+r_{m,-1}((f_1(z)p_{m-1}(z)-P_{m-1}(z)).
\end{eqnarray}
First term on left-hand side of (\ref{eq:zPintermid}) cancel with $p_k(z)$ and $f_1(z)$ terms on right-hand side through (\ref{precrel}), which completes the proof of (\ref{Precrel}).
\end{proof}

To avoid summation form of the kernels~(\ref{eq:01}), (\ref{eq:10}), we shall use Christoffel-Darboux form of the Cauchy-Laguerre kernels, which was first obtained in~\cite{witte2022gapprobabilitiesbureshallensemble} via Hessenberg matrices. In Proposition~\ref{Prop:CD} below, we give an elementary proof of the results, for which some intermediate results are stated first.
\begin{lemma}\label{prop:cjkdkj}
Let $c_{j,k}$, $d_{k,j}$ be two sets of numbers such that
    \begin{subequations}
    \begin{eqnarray}
          xp_k(x)&=&\sum_{j=0}^{k+1}c_{j,k}p_j(x) \label{eq:xpx}\\
        xq_j(x)&=&\sum_{k=0}^{j+1}d_{k,j}q_k(x), \label{eq:xqx}
    \end{eqnarray}
      \end{subequations}
      we then have
       \begin{subequations}
       \begin{eqnarray}
            c_{j,k}&=&\int_{0}^{\infty}\int_{0}^{\infty}p_k(x)xq_j(y)\frac{x^\alpha y^{\alpha+1} e^{-x-y} }{x+y}\dd x\dd y \label{eq:cjk}\\
     d_{k,j}&=&\int_{0}^{\infty}\int_{0}^{\infty}p_k(x)yq_j(y)\frac{x^\alpha y^{\alpha+1} e^{-x-y} }{x+y}\dd x\dd y,\label{eq:dkj}
       \end{eqnarray}
 \end{subequations}
 and
\begin{equation}\label{eq:cjkdkj}
c_{j,k}+d_{k,j}=2(\alpha+j+1).
\end{equation}
\end{lemma}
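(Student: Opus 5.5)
Both formulas (\ref{eq:cjk}) and (\ref{eq:dkj}) come from the biorthogonality (\ref{eq:oc}), and the sum rule (\ref{eq:cjkdkj}) comes from adding the two double integrals. The factor $x+y$ then cancels the Cauchy denominator, and the resulting product of one-dimensional integrals is evaluated explicitly.

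\textbf{Step 1 (coefficient formulas).} Since $\deg p_k=k$ with leading coefficient $S_k\neq0$, $\{p_j\}$ is a basis of polynomials, so the expansion (\ref{eq:xpx}) exists with upper index $k+1$. Multiply (\ref{eq:xpx}) by $q_j(y)W(x,y)$ and integrate over $(0,\infty)^2$. By (\ref{eq:oc}), only the $j$-th term survives, giving
\[
\int\!\!\int x p_k(x)q_j(y)W\,\dd x\,\dd y=c_{j,k}h_j .
\]
This is (\ref{eq:cjk}) provided $h_j=1$. The normalization $\sqrt2$ in (\ref{eq:kerM}) is presumably chosen to make the polynomials biorthonormal. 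I would confirm this by checking $h_0$ directly and citing~\cite{Bertola14}; the identical formulas for $a_{m,j}$ in Lemma~\ref{lemma:strucrel} also implicitly use it. If some $h_j\neq1$, the formulas would carry a factor $1/h_j$, and everything below adapts. The same argument with (\ref{eq:xqx}) against $p_k(x)$ gives (\ref{eq:dkj}).

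\textbf{Step 2 (sum rule).} Add (\ref{eq:cjk}) and (\ref{eq:dkj}). Since $(x+y)/(x+y)=1$, the integral factorizes:
\[
c_{j,k}+d_{k,j}=\int_0^\infty p_k(x)x^\alpha e^{-x}\dd x\int_0^\infty q_j(y)y^{\alpha+1}e^{-y}\dd y .
\]
This must equal $2(\alpha+j+1)$ for all $j$ and $k$ in the relevant range. It therefore suffices to show:
\begin{itemize}
\item $\int_0^\infty p_k(x)x^\alpha e^{-x}\dd x$ is a constant independent of $k$; this is the fact (\ref{eq:pint}) invoked in Lemma~\ref{lemma:fourtermrec}.
\item $\int_0^\infty q_j(y)y^{\alpha+1}e^{-y}\dd y$ is proportional to $\alpha+j+1$, with product of constants equal to $2$.
\end{itemize}

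\textbf{Step 3 (evaluating the moments).} Use the explicit sums (\ref{eq:pexplicit}) and (\ref{eq:qexplicit}) and integrate term by term with $\int_0^\infty x^{j+\alpha}e^{-x}\dd x=\Gamma(\alpha+j+1)$. The $\Gamma(\alpha+j+1)$ cancels, leaving
\[
\sqrt2(-1)^k\sum_{j=0}^k\frac{(-1)^j\Gamma(2\alpha+j+k+2)}{j!\,\Gamma(2\alpha+j+2)\,(k-j)!}.
\]
This is a terminating ${}_2F_1$ at argument $1$: in Pochhammer form it equals $\frac{\Gamma(2\alpha+k+2)}{k!\,\Gamma(2\alpha+2)}\,{}_2F_1(-k,2\alpha+k+2;2\alpha+2;1)$. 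Chu--Vandermonde gives ${}_2F_1=(-k)_k/(2\alpha+2)_k=(-1)^k k!/(2\alpha+2)_k$. The sum therefore collapses to $\sqrt2$, independent of $k$.

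For $q_j$ the factor $\Gamma(\alpha+j+2)$ cancels in the same way. The identical sum appears multiplied by $\sqrt2(\alpha+j+1)$, giving $\sqrt2(\alpha+j+1)$. The product is $2(\alpha+j+1)$, as required.

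\textbf{Main obstacle.} The only delicate point is the normalization $h_j=1$ in Step 1. Everything else is the Chu--Vandermonde evaluation. Should $h_j$ differ, I would verify $h_j=1$ via the known result of~\cite{Bertola14}.
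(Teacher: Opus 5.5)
Your proposal is correct and follows the paper's argument. You use biorthogonality for the coefficient formulas, cancel $x+y$ to factorize $c_{j,k}+d_{k,j}$, and evaluate the explicit sums (\ref{eq:pexplicit}), (\ref{eq:qexplicit}) to get $\sqrt2$ and $\sqrt2(\alpha+j+1)$; your Chu--Vandermonde step usefully justifies values the paper only asserts.

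The normalization $h_j=1$ you flag is used silently by the paper too, and it does hold for all $j$. For example, $h_j=S_j\int\!\!\int x^j q_j(y)W\dd x\dd y$ evaluates to $1$ by a Pfaff--Saalsch\"utz sum. Your fallback remark is off, though: if $h_j\neq1$, the sum rule would become $c_{j,k}h_j+d_{k,j}h_k=2(\alpha+j+1)$ rather than adapting to the stated identity.
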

\begin{proof}
    Insert (\ref{eq:xpx}) into right-hand side of (\ref{eq:cjk}) and use biorthogonality (\ref{eq:oc}) to obtain (\ref{eq:cjk}), and similarly for the (\ref{eq:dkj}). For (\ref{eq:cjkdkj}), notice that
   \begin{eqnarray}
    c_{j,k}+d_{k,j}&=&\int_{0}^{\infty}\int_{0}^{\infty}p_k(x)q_j(y)x^\alpha y^{\alpha+1} e^{-x-y} \dd x \dd y\nonumber\\
        &=&\int_{0}^{\infty}p_k(x)x^\alpha e^{-x}\dd x\int_{0}^{\infty}q_j(y)y^{\alpha+1} e^{-y}\dd y,
   \end{eqnarray}
   where by (\ref{eq:pexplicit}), (\ref{eq:qexplicit}) \begin{eqnarray}
  \int_{0}^{\infty}p_k(x)x^\alpha e^{-x}\dd x &=&\sqrt{2}\label{eq:pint}\\
  \int_{0}^{\infty}q_j(y)y^{\alpha+1} e^{-y}\dd y &=&\sqrt{2}(\alpha+j+1).\label{eq:qint}
    \end{eqnarray}
\end{proof}

In particular,
    \begin{equation}\label{eq:cmm-1}
        c_{m,m-1}= \int_{0}^{\infty}\int_{0}^{\infty}p_{m-1}(x)xq_m(y)\frac{x^\alpha y^{\alpha+1} e^{-x-y} }{x+y}\dd x \dd y=\frac{S_{m-1}}{S_m}
    \end{equation}
    \begin{equation}\label{eq:dmm-1}
        d_{m,m-1}= \int_{0}^{\infty}\int_{0}^{\infty}p_m(x)yq_{m-1}(y)\frac{x^\alpha y^{\alpha+1} e^{-x-y} }{x+y}\dd x\dd y=\frac{S_{m-1}}{S_m}.\label{eq:dmm-1}
    \end{equation}
Also, we denote $S_m^{m-1}$ the $m-1$ degree term coefficient of $p_m(x)$, then by equating the $x^m$ term coefficients in (\ref{eq:pexplicit}) we have 
\begin{equation}
   c_{m,m}=\frac{S_m^{m-1}}{S_m}-\frac{S_{m+1}^m}{S_{m+1}}=\alpha+m+1+  \frac{S_{m-1}}{S_m}-\frac{S_m}{S_{m+1}},\label{eq:cmmhighest}
\end{equation}
and hence by (\ref{eq:cjkdkj}) we have 
\begin{equation}
   d_{m,m}=\alpha+m+1-  \frac{S_{m-1}}{S_m}+\frac{S_m}{S_{m+1}}.\label{eq:dmmhighest}
\end{equation}

It is worth mentioning that, while the results in~(\ref{singlesums}) below are consequences of Christoffel-Darboux type formulas with single summations in~\cite{witte2022gapprobabilitiesbureshallensemble}, we derive these results by elementary methods using biorthogonality, and adopt them in the derivation and interpretation of Christoffel-Darboux formulas without any summation of Cauchy-Laguerre biorthogonal kernels.
\begin{lemma}
\begin{subequations}\label{singlesums}
     \begin{eqnarray}
\fl            2\sum _{j=0}^{m-1}  (\alpha +j+1) p_j(x)&=&\frac{S_{m-1}}{S_m}p_{m-1}(x)-\frac{S_m }{S_{m+1}}p_{m+1}(x)+x p_m(x)\nonumber\\
            &&-\left(\alpha +m+1-\frac{S_m}{S_{m+1}}+\frac{S_{m-1}}{S_m}\right)p_m(x) \label{eq:phat}
\end{eqnarray}
   
\begin{eqnarray}
\fl        2(\alpha +m+1) \sum _{k=0}^{m-1} q_k(y)&=& \frac{S_{m-1}}{S_m}q_{m-1}(y)-\frac{S_m }{S_{m+1}}q_{m+1}(y)+y q_m(y)\nonumber\\
         &&- \left(\alpha+m+1 +\frac{S_m}{S_{m+1}}-\frac{S_{m-1}}{S_m}\right)q_m(y) \label{eq:qcheck}
     \end{eqnarray}

\begin{eqnarray}
\fl            2\sum _{j=0}^{m-1}  (\alpha +j+1) P_j(x)+\sqrt{2}&=&\frac{S_{m-1}}{S_m}P_{m-1}(x)-\frac{S_m }{S_{m+1}}P_{m+1}(x)+x P_m(x)\nonumber\\
            &&-\left(\alpha +m+1-\frac{S_m}{S_{m+1}}+\frac{S_{m-1}}{S_m}\right)P_m(x) \label{eq:Phat}
\end{eqnarray}
   
\begin{eqnarray}\label{eq:Qcheck}
\fl 2 (\alpha +m+1) \left(\sum _{k=0}^{m-1} Q_k(y)+\frac{1}{\sqrt{2}}\right)&=&\frac{S_{m-1} }{S_m}Q_{m-1}(y)-\frac{S_m }{S_{m+1}}Q_{m+1}(y)+y Q_m(y)\nonumber \\
&& -\left(\alpha +m+1+\frac{S_m}{S_{m+1}}-\frac{S_{m-1}}{S_m}\right)Q_m(y).
\end{eqnarray}
\end{subequations}   
\end{lemma}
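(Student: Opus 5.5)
The identities (\ref{eq:phat}) and (\ref{eq:qcheck}) follow from expanding the right-hand sides in the biorthogonal basis and reading off coefficients. The identities (\ref{eq:Phat}) and (\ref{eq:Qcheck}) for the Cauchy transforms then follow by the same integral transform used in Lemma~\ref{lemma:fourtermrec}. The main obstacle is controlling the constants that transform produces.

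\emph{Step 1: expansion of the right-hand side of (\ref{eq:phat}).} By (\ref{eq:xpx}), the right-hand side of (\ref{eq:phat}) is a polynomial of degree at most $m+1$. Its $x^{m+1}$ coefficient is $S_m-\frac{S_m}{S_{m+1}}S_{m+1}=0$, so it has degree at most $m$. Write it as $\sum_{j=0}^{m}e_jp_j(x)$. By biorthogonality (\ref{eq:oc}),
\[
h_j e_j=\int\!\!\int \mathrm{RHS}(x)\,q_j(y)\,W(x,y)\dd x\dd y .
\]
\begin{itemize}
\item For $j\le m-2$: only the term $xp_m$ contributes, through $c_{j,m}$. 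The remaining terms are $p_{m-1},p_m,p_{m+1}$, which are orthogonal to $q_j$.
\item Using (\ref{eq:cjkdkj}), $c_{j,m}=2(\alpha+j+1)-d_{m,j}$.
\item $d_{m,j}=0$ for $j\le m-2$, since by (\ref{eq:xqx}) $yq_j(y)$ is a combination of $q_0,\dots,q_{j+1}$, all orthogonal to $p_m$.
\end{itemize}
Hence $e_j=2(\alpha+j+1)$ for $j\le m-2$, after normalising via $h_j$. I would first verify from (\ref{eq:oc}), (\ref{eq:cjk}) and the explicit forms how $h_k$ enters. The statement has no $1/h_j$, which suggests $h_j=1$ in this normalisation; I would check this against (\ref{eq:cmm-1}), where $c_{m,m-1}=S_{m-1}/S_m$ is consistent with $h=1$.

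\emph{Step 2: the top two coefficients.} For $j=m-1$:
\[
e_{m-1}=\frac{S_{m-1}}{S_m}+c_{m-1,m}=\frac{S_{m-1}}{S_m}+2(\alpha+m)-d_{m,m-1},
\]
and by (\ref{eq:dmm-1}) this equals $2(\alpha+m)$. For $j=m$:
\[
e_m=c_{m,m}-\Bigl(\alpha+m+1-\frac{S_m}{S_{m+1}}+\frac{S_{m-1}}{S_m}\Bigr)=0
\]
by (\ref{eq:cmmhighest}). This proves (\ref{eq:phat}).

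The proof of (\ref{eq:qcheck}) is symmetric:
\begin{itemize}
\item Expand in the $q_k$ and pair with $p_k(x)$, using $d_{k,m}=2(\alpha+m+1)-c_{m,k}$.
\item $c_{m,k}=0$ for $k\le m-2$.
\item $c_{m,m-1}=S_{m-1}/S_m$ from (\ref{eq:cmm-1}).
\item $d_{m,m}$ is given by (\ref{eq:dmmhighest}).
\end{itemize}

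\emph{Step 3: the Cauchy-transformed identities.} I would rewrite (\ref{eq:phat}) at $z$, subtract the same identity at $x$, multiply by $\frac{x^\alpha e^{-x}}{z-x}$ and integrate, exactly as in the proof of Lemma~\ref{lemma:fourtermrec}, using $P_k=f_1p_k-\tilde P_{k-1}$ from (\ref{eq:pkptilde}).

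The only new feature is the term $xp_m(x)$. Writing $z p_m(z)-xp_m(x)=z(p_m(z)-p_m(x))+(z-x)p_m(x)$, integration produces the extra constant $\int_0^\infty p_m(x)x^\alpha e^{-x}\dd x=\sqrt2$ by (\ref{eq:pint}). The $f_1p$ parts cancel by (\ref{eq:phat}) itself. This should yield (\ref{eq:Phat}) with the stated $\sqrt2$, and I would double-check its sign.

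For (\ref{eq:Qcheck}) the same argument applies with weight $w^{\alpha+1}e^{-w}$. The constant becomes $\int q_m y^{\alpha+1}e^{-y}\dd y=\sqrt2(\alpha+m+1)$ by (\ref{eq:qint}), which matches $2(\alpha+m+1)/\sqrt2$. This requires the analogous relation $Q_k=f_2q_k-\tilde Q_{k-1}$ with $f_2(z)=\int\frac{w^{\alpha+1}e^{-w}}{z-w}\dd w$. I expect tracking these constants and the sign conventions of $P_k,Q_k$ to be the main obstacle.
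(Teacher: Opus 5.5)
Your proposal is correct and follows essentially the paper's route. The paper likewise expands $xp_m=\sum_j c_{j,m}p_j$, uses $c_{j,m}=2(\alpha+j+1)-d_{m,j}$ with $d_{m,j}=0$ for $j\le m-2$, and inserts $d_{m,m-1}=S_{m-1}/S_m$, $c_{m+1,m}=S_m/S_{m+1}$ and (\ref{eq:cmmhighest}). It then obtains (\ref{eq:Phat}) and (\ref{eq:Qcheck}) by the same ``$z$-copy minus $x$-copy'' Cauchy-transform argument with (\ref{eq:pkptilde}) and (\ref{eq:qkqtilde}).

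The two points you flagged as needing a check both work out:
\begin{itemize}
\item The constants $\sqrt{2}$ and $\sqrt{2}(\alpha+m+1)$ from (\ref{eq:pint}) and (\ref{eq:qint}) come out with exactly the signs stated.
\item The normalisation is indeed $h_j=1$ (for instance, $h_0=h_1=1$ from the explicit forms), which the paper also uses implicitly in (\ref{eq:cjk}).
\end{itemize}
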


\begin{proof}
    We first give proof of (\ref{eq:phat}), and (\ref{eq:qcheck}) is similar. From (\ref{eq:xpx}) we have   \begin{equation}
        xp_m(x)=\sum_{j=0}^{m+1}c_{j,m}p_j(x).
    \end{equation}
    From (\ref{eq:cjkdkj}) we know that \begin{equation}
        c_{j,m}=2(\alpha +j+1)-d_{m,j}.
    \end{equation}
    But $d_{m,j}$ has non-zero value only when $j\geq m-1$
    i.e. \begin{equation}\label{eq:dterminate}
        d_{m,j}=0 \mbox{ for } j \leq m-2.
    \end{equation}
    Hence we have 
    \begin{equation}
         c_{j,m}=2(\alpha +j+1) \mbox{ for } j \leq m-2.
    \end{equation}
    That is \begin{eqnarray}
         xp_m(x)&=&\sum_{j=0}^{m-2}2(\alpha +j+1) p_j(x)\nonumber\\
         &&+c_{m-1,m}p_{m-1}(x)+c_{m,m}p_{m}(x)+c_{m+1,m}p_{m+1}(x).
    \end{eqnarray}
    Also \begin{eqnarray}
          c_{m-1,m}&=&2(\alpha+m)-d_{m,m-1}\nonumber\\
          &=& 2(\alpha+m)-\frac{S_{m-1}}{S_m} \qquad \mbox{by (\ref{eq:dmm-1})}
    \end{eqnarray}
 \begin{equation}
     c_{m+1,m}=\int_{0}^{\infty}\!\int_{0}^{\infty}p_m(x)xq_{m+1}(y)\frac{x^\alpha y^{\alpha+1} e^{-x-y} }{x+y}dxdy=\frac{S_m}{S_{m+1}}.
      \end{equation} 
Hence we have \begin{eqnarray}
     xp_m(x)&=&\sum_{j=0}^{m-1}2(\alpha +j+1) p_j(x)\nonumber\\
     &&-\frac{S_{m-1}}{S_m}p_{m-1}(x)+c_{m,m}p_{m}(x)+\frac{S_m}{S_{m+1}}p_{m+1}(x). \label{eq:phatcmm}
\end{eqnarray}
 Insert (\ref{eq:cmmhighest}) into (\ref{eq:phatcmm}) and one obtains (\ref{eq:phat}). Similarly one can obtain (\ref{eq:qcheck}).

 We can then derive (\ref{eq:Phat}) and (\ref{eq:Qcheck}) from (\ref{eq:phat}) and (\ref{eq:qcheck}), respectively. We demonstrate the process from (\ref{eq:qcheck}) to (\ref{eq:Qcheck}), while the process from (\ref{eq:phat}) to (\ref{eq:Phat}) is similar.
Set
\begin{eqnarray}\label{eq:Qtilde}
     \tilde{Q}_{k-1}(z) &=& \int_0^{\infty}\frac{q_k(z)-q_k(y)}{z-y}y^{\alpha+1} e^{-y}\dd y \label{eq:qtilde}
     \end{eqnarray}
     and 
 \begin{eqnarray}    
      f_2(z) &=& \int_0^{\infty}\frac{1}{z-y}y^{\alpha+1} e^{-y}\dd y.\label{eq:f2}
\end{eqnarray}
Similar to~(\ref{eq:pkptilde}), we have that 
\begin{eqnarray}\label{eq:qkqtilde}
    Q_k(z)=f_2(z)q_k(z)-\tilde{Q}_{k-1}(z).
\end{eqnarray}
  Multiply   \begin{equation}\label{eq:zy}
    \frac{1}{z-y}y^{\alpha+1} e^{-y}
\end{equation}
on the difference of $z$-copy and $y$-copy of (\ref{eq:qcheck}) and integrate over $y\in[0,\infty)$ with terms rewritten by $ \tilde{Q}_{k-1}(z)$ and $f_2(z)$, we have (\ref{eq:Qcheck}), where we have used (\ref{eq:qint}) for integral term
\begin{eqnarray}
   \int_0^\infty q_m(y) y^{\alpha+1}e^{-y}\dd y,
\end{eqnarray} and the identity~(\ref{eq:qkqtilde}). Similarly one can obtain (\ref{eq:Phat}) from (\ref{eq:phat}) using (\ref{eq:ptilde}), (\ref{eq:f1}) and~(\ref{eq:pkptilde}). We have completed the proof.
\end{proof}
 
As a consequence of the above results, we now derive the Christoffel-Darboux formulas for Cauchy-Laguerre kernels from their biorthogonality.
\begin{proposition}\label{Prop:CD}
\begin{subequations}\label{CDs}
        \begin{eqnarray}
      && (x+y)\sum_{k=0}^{m-1}p_k(x)q_k(y)\nonumber\\
        &=&\frac{1}{2 (\alpha +m+1)}\left(\frac{S_{m-1}}{S_m}p_{m-1}(x)-\frac{S_m }{S_{m+1}}p_{m+1}(x)+x p_m(x)\right.\nonumber\\
            &&\left.-\left(\alpha +m+1-\frac{S_m}{S_{m+1}}+\frac{S_{m-1}}{S_m}\right)p_m(x)\right)\left(\frac{S_{m-1}}{S_m}q_{m-1}(y)\right.\nonumber\\
         &&\left.-\frac{S_m }{S_{m+1}}q_{m+1}(y)+y q_m(y)- \left(\alpha+m+1 +\frac{S_m}{S_{m+1}}-\frac{S_{m-1}}{S_m}\right)q_m(y)\right)\nonumber\\
        &&+\frac{S_{m-1}}{S_m}\left(p_m(x)q_{m-1}(y)+p_{m-1}(x)q_{m}(y)\right)\label{CD}
\end{eqnarray}

    \begin{eqnarray}
      && (x+y)\sum_{k=0}^{m-1}P_k(x)q_k(y)\nonumber\\
        &=&\frac{1}{2 (\alpha +m+1)}\left(\frac{S_{m-1}}{S_m}P_{m-1}(x)-\frac{S_m }{S_{m+1}}P_{m+1}(x)+x P_m(x)\right.\nonumber\\
            &&\left.-\left(\alpha +m+1-\frac{S_m}{S_{m+1}}+\frac{S_{m-1}}{S_m}\right)P_m(x)\right)\left(\frac{S_{m-1}}{S_m}q_{m-1}(y)\right.\nonumber\\
         &&\left.-\frac{S_m }{S_{m+1}}q_{m+1}(y)+y q_m(y)- \left(\alpha+m+1 +\frac{S_m}{S_{m+1}}-\frac{S_{m-1}}{S_m}\right)q_m(y)\right)\nonumber\\
        &&+\frac{S_{m-1}}{S_m}\left(P_m(x)q_{m-1}(y)+P_{m-1}(x)q_{m}(y)\right)\label{CDPq}
\end{eqnarray}

\begin{eqnarray}
       &&(x+y)\sum_{k=0}^{m-1}p_k(x)Q_k(y)\nonumber\\
        &=&\frac{1}{2 (\alpha +m+1)}\left(\frac{S_{m-1}}{S_m}p_{m-1}(x)-\frac{S_m }{S_{m+1}}p_{m+1}(x)+x p_m(x)\right.\nonumber\\
            &&\left.-\left(\alpha +m+1-\frac{S_m}{S_{m+1}}+\frac{S_{m-1}}{S_m}\right)p_m(x)\right)\left(\frac{S_{m-1}}{S_m}Q_{m-1}(y)\right.\nonumber\\
         &&\left.-\frac{S_m }{S_{m+1}}Q_{m+1}(y)+y Q_m(y)- \left(\alpha+m+1 +\frac{S_m}{S_{m+1}}-\frac{S_{m-1}}{S_m}\right)Q_m(y)\right)\nonumber\\
        &&+\frac{S_{m-1}}{S_m}\left(p_m(x)Q_{m-1}(y)+p_{m-1}(x)Q_{m}(y)\right)\label{CDpQ}
\end{eqnarray}

 \begin{eqnarray}
     &&  (x+y)\sum_{k=0}^{m-1}P_k(x)Q_k(y)\nonumber\\
        &=&-1+\frac{1}{2 (\alpha +m+1)}\left(\frac{S_{m-1}}{S_m}P_{m-1}(x)-\frac{S_m }{S_{m+1}}P_{m+1}(x)+x P_m(x)\right.\nonumber\\
            &&\left.-\left(\alpha +m+1-\frac{S_m}{S_{m+1}}+\frac{S_{m-1}}{S_m}\right)P_m(x)\right)\left(\frac{S_{m-1}}{S_m}Q_{m-1}(y)\right.\nonumber\\
         &&\left.-\frac{S_m }{S_{m+1}}Q_{m+1}(y)+y Q_m(y)- \left(\alpha+m+1 +\frac{S_m}{S_{m+1}}-\frac{S_{m-1}}{S_m}\right)Q_m(y)\right)\nonumber\\
        &&+\frac{S_{m-1}}{S_m}\left(P_m(x)Q_{m-1}(y)+P_{m-1}(x)Q_{m}(y)\right).\label{CDPQ}
\end{eqnarray}
\end{subequations}

\end{proposition}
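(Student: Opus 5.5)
The approach is to expand $(x+y)\sum_{k=0}^{m-1}p_k(x)q_k(y)$ in the product basis $\{p_j(x)q_l(y)\}$ using the coefficients of Lemma~\ref{prop:cjkdkj}. The point is that the identity (\ref{eq:cjkdkj}) makes the "bulk" of the resulting coefficient matrix rank one. That rank-one piece then factorizes into the two single sums (\ref{eq:phat}) and (\ref{eq:qcheck}), and only two boundary terms are left over.

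\textbf{Step 1 (polynomial case (\ref{CD})).} Using (\ref{eq:xpx}) and (\ref{eq:xqx}) I write
\begin{equation*}
(x+y)\sum_{k=0}^{m-1}p_k(x)q_k(y)=\sum_{k=0}^{m-1}\sum_{j}c_{j,k}\,p_j(x)q_k(y)+\sum_{k=0}^{m-1}\sum_{l}d_{l,k}\,p_k(x)q_l(y).
\end{equation*}
For $0\le j,l\le m-1$ the coefficient of $p_j(x)q_l(y)$ is $c_{j,l}+d_{l,j}$. Both terms are integrals against $p_l(x)q_j(y)$, so by (\ref{eq:cjkdkj}), (\ref{eq:pint}) and (\ref{eq:qint}) this coefficient equals $2(\alpha+j+1)$, independent of $l$; this holds even where the individual $c$ or $d$ vanish. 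The only remaining terms have an index equal to $m$, and they come from $k=m-1$: namely $c_{m,m-1}p_m(x)q_{m-1}(y)+d_{m,m-1}p_{m-1}(x)q_m(y)$. By (\ref{eq:cmm-1}) and (\ref{eq:dmm-1}) both coefficients equal $S_{m-1}/S_m$, which is exactly the last line of (\ref{CD}).

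The bulk is therefore
\begin{equation*}
\Big(2\sum_{j=0}^{m-1}(\alpha+j+1)p_j(x)\Big)\Big(\sum_{l=0}^{m-1}q_l(y)\Big)=\frac{1}{2(\alpha+m+1)}\Big(2\sum_{j}(\alpha+j+1)p_j(x)\Big)\Big(2(\alpha+m+1)\sum_l q_l(y)\Big).
\end{equation*}
Substituting (\ref{eq:phat}) and (\ref{eq:qcheck}) gives the product term of (\ref{CD}).

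\textbf{Step 2 (transformed versions).} I first derive the analogue of (\ref{eq:xpx}) for the Cauchy transforms. Multiply $xp_k(x)=\sum_j c_{j,k}p_j(x)$ by $v^\alpha e^{-v}/(z-v)$ (with $x\to v$), integrate, and use $v=z-(z-v)$ together with (\ref{eq:pint}). This gives
\begin{equation*}
zP_k(z)-\sqrt2=\sum_j c_{j,k}P_j(z)+\textrm{const},
\end{equation*}
where the constant is zero because $\sum_j c_{j,k}\sqrt2$ must match; this needs checking, and I expect it to yield $zP_k(z)=\sum_j c_{j,k}P_j(z)+\sqrt2$ after care with signs. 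Similarly, $yQ_j(y)=\sum_k d_{k,j}Q_k(y)+\sqrt2(\alpha+j+1)$.

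Repeating Step 1 with these relations reproduces the same bulk and boundary structure, plus inhomogeneous terms.
\begin{itemize}
\item For (\ref{CDPq}) the extra term is $\sqrt2\sum_l q_l(y)$. Combined with the $+\sqrt2$ on the left of (\ref{eq:Phat}), it is exactly absorbed when the bulk is factorized via (\ref{eq:Phat}) and (\ref{eq:qcheck}).
\item For (\ref{CDpQ}) the extra term $\sqrt2\sum_j(\alpha+j+1)p_j(x)$ is absorbed analogously by the $1/\sqrt2$ in (\ref{eq:Qcheck}).
\item For (\ref{CDPQ}) both corrections appear, and the factorized product overcounts by the cross term $\frac{1}{2(\alpha+m+1)}\cdot\sqrt2\cdot 2(\alpha+m+1)/\sqrt2=1$. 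This produces the $-1$.
\end{itemize}

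The main obstacle is this bookkeeping of the inhomogeneous constants in Step 2. The signs and normalizations must make the corrections combine exactly with the constants in (\ref{eq:Phat}) and (\ref{eq:Qcheck}). If the direct approach gets messy, my fallback is to start from the already-proved (\ref{CD}). I would multiply by $v^\alpha e^{-v}/(z-v)$, integrate in the $x$-variable, and use (\ref{eq:pkptilde}) and the difference trick from the proof of Lemma~\ref{lemma:fourtermrec}. The $(x+y)$ factor is handled by writing $v+y=(z+y)-(z-v)$, whose second piece gives a $\sqrt2$-weighted sum of the $q_k(y)$ that cancels against the corresponding term.
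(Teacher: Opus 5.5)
Your Step~1 is exactly the paper's proof of (\ref{CD}). For the three Cauchy-transformed identities you take a different route, and it is correct.

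The paper transfers the finished formula (\ref{CD}). It subtracts $\frac{1}{2(\alpha+m+1)}xp_m(x)$ times (\ref{eq:qcheck}), takes the $z$-copy minus the $x$-copy, integrates against $x^\alpha e^{-x}/(z-x)$, and rewrites everything via $\tilde{P}_{k-1}=f_1p_k-P_k$ from (\ref{eq:pkptilde}). It then cancels the $f_1$-terms using (\ref{CD}) again, and reaches (\ref{CDPQ}) only by a second such transfer from (\ref{CDpQ}).

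You transfer just the one-step expansions (\ref{eq:xpx}), (\ref{eq:xqx}). Since $z/(z-v)=1+v/(z-v)$, (\ref{eq:pint}) and (\ref{eq:qint}) give exactly $zP_k(z)=\sum_j c_{j,k}P_j(z)+\sqrt2$ and $yQ_j(y)=\sum_k d_{k,j}Q_k(y)+\sqrt2(\alpha+j+1)$. There is no further constant to determine, so you should replace your hedged sentence about ``const'' with this one-line computation.

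Step~1 then runs verbatim, with the same bulk coefficients $2(\alpha+j+1)$ and the same boundary coefficients $S_{m-1}/S_m$. The inhomogeneous pieces $\sqrt2\sum_l q_l(y)$ and $\sqrt2\sum_j(\alpha+j+1)p_j(x)$ complete the bulk to the products built from (\ref{eq:Phat}), (\ref{eq:qcheck}) and from (\ref{eq:phat}), (\ref{eq:Qcheck}), respectively. In the $PQ$ case, $\bigl(2\sum_j(\alpha+j+1)P_j+\sqrt2\bigr)\bigl(\sum_l Q_l+1/\sqrt2\bigr)$ exceeds bulk plus both corrections by exactly $1$, which gives the $-1$. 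I checked all three cases and the bookkeeping works as you claim.

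What each route buys:
\begin{itemize}
\item Your argument treats all four formulas uniformly and avoids the imported identity (\ref{eq:pkptilde}) and the auxiliary $\tilde P$, $f_1$, $\tilde Q$, $f_2$.
\item It also explains the $-1$ as the product of the two inhomogeneities.
\item Applied to the expansions of $xp_m$ and $yq_m$ (cf.\ (\ref{eq:phatcmm})), the same transferred recurrences give (\ref{eq:Phat}) and (\ref{eq:Qcheck}) directly as well.
\item The paper's route instead exhibits the transformed formulas as formal consequences of the polynomial identity (\ref{CD}) alone. It uses the same difference-quotient mechanism as for Lemma~\ref{lemma:fourtermrec} and (\ref{singlesums}). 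Your fallback is essentially that route.
\end{itemize}
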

\begin{proof}
     We first prove (\ref{CD}). We have\begin{eqnarray}
       &&(x+y)\sum_{k=0}^{m-1}p_k(x)q_k(y)\nonumber\\
        &=&\sum_{k=0}^{m-1}\sum_{j=0}^{k+1}c_{j,k}p_j(x)q_k(y)+\sum_{j=0}^{m-1}\sum_{k=0}^{j+1}d_{j,k}p_j(x)q_k(y)\nonumber\\
        &=&\sum_{k=0}^{m-1}\sum_{j=0}^{m-1}( c_{j,k}+d_{k,j})p_j(x)q_k(y)\nonumber\\
        &&+c_{m,m-1}p_m(x)q_{m-1}(y)+d_{m,m-1}p_{m-1}(x)q_{m}(y)\nonumber\\
        &=&2\sum_{k=0}^{m-1}q_k(y)\sum_{j=0}^{m-1}(\alpha+j+1)p_j(x) \nonumber\\
        &&+\frac{S_{m-1}}{S_m}\left(p_m(x)q_{m-1}(y)+p_{m-1}(x)q_{m}(y)\right) \nonumber\\
        &=&\frac{1}{2 (\alpha +m+1)}\left(\frac{S_{m-1}}{S_m}p_{m-1}(x)-\frac{S_m }{S_{m+1}}p_{m+1}(x)+x p_m(x)\right.\nonumber\\
            &&\left.-\left(\alpha +m+1-\frac{S_m}{S_{m+1}}+\frac{S_{m-1}}{S_m}\right)p_m(x)\right)\left(\frac{S_{m-1}}{S_m}q_{m-1}(y)\right.\nonumber\\
         &&\left.-\frac{S_m }{S_{m+1}}q_{m+1}(y)+y q_m(y)- \left(\alpha+m+1 +\frac{S_m}{S_{m+1}}-\frac{S_{m-1}}{S_m}\right)q_m(y)\right)\nonumber\\
        &&+\frac{S_{m-1}}{S_m}\left(p_m(x)q_{m-1}(y)+p_{m-1}(x)q_{m}(y)\right),
\end{eqnarray}
where the first equality is from (\ref{eq:cjk}) and (\ref{eq:dkj}), the second equality is from (\ref{eq:dterminate}), the third equality is from (\ref{eq:cjkdkj}), (\ref{eq:cmm-1}) and (\ref{eq:dmm-1}), and the last equality is from (\ref{eq:phat}) and (\ref{eq:qcheck}).

We can then derive (\ref{CDpQ}), (\ref{CDPq}), (\ref{CDPQ}) from (\ref{CD}). We give proof of (\ref{CDPq}), and the rest are similar.
Subtract the left-hand side of (\ref{eq:qcheck}) multiplied by \begin{eqnarray}
    \frac{1}{2(\alpha+m+1)}xp_m(x)
\end{eqnarray} from the left-hand side of (\ref{CD}), we have 
\begin{eqnarray}\label{eq:CD-xpmx}
     &&\sum_{k=0}^{m-1}x p_k(x)q_k(y)-xp_m(x) \sum_{k=0}^{m-1}q_k(y) \nonumber\\
     &=&\frac{1}{2 (\alpha +m+1)}\left(\frac{S_{m-1}}{S_m}p_{m-1}(x)-\frac{S_m }{S_{m+1}}p_{m+1}(x)\right.\nonumber\\
            &&\left.-\left(\alpha +m+1-\frac{S_m}{S_{m+1}}+\frac{S_{m-1}}{S_m}\right)p_m(x)\right)\left(\frac{S_{m-1}}{S_m}q_{m-1}(y)\right.\nonumber\\
         &&\left.-\frac{S_m }{S_{m+1}}q_{m+1}(y)+y q_m(y)- \left(\alpha+m+1 +\frac{S_m}{S_{m+1}}-\frac{S_{m-1}}{S_m}\right)q_m(y)\right)\nonumber\\
        &&+\frac{S_{m-1}}{S_m}\left(p_m(x)q_{m-1}(y)+p_{m-1}(x)q_{m}(y)\right)-y\sum_{k=0}^{m-1}p_k(x)q_k(y) .
\end{eqnarray}
Hence we have, by differencing the $z$-copy and $x$-copy of (\ref{eq:CD-xpmx}),
\begin{eqnarray}\label{eq:zx}
     &&\!\!\!\!\!\!\!\!\sum_{k=0}^{m-1}z p_k(z)q_k(y)-zp_m(z) \!\sum_{k=0}^{m-1}q_k(y) -\sum_{k=0}^{m-1}x p_k(x)q_k(y)+xp_m(x) \!\sum_{k=0}^{m-1}q_k(y)\nonumber\\
   &\!\!\!\!\!\!\!\!\!\!\!\!\!\!\!\!=&\!\!\!\!\!\!\!\!\frac{1}{2 (\alpha +m+1)}\left(\frac{S_{m-1}}{S_m}\left(p_{m-1}(z)-p_{m-1}(x)\right)-\frac{S_m }{S_{m+1}}\left(p_{m+1}(z)-p_{m+1}(x)\right)\right.\nonumber\\
            &&\!\!\!\!\!\!\!\!\left.-\left(\alpha +m+1-\frac{S_m}{S_{m+1}}+\frac{S_{m-1}}{S_m}\right)\left(p_m(z)-p_m(x)\right)\right)\left(\frac{S_{m-1}}{S_m}q_{m-1}(y)\right.\nonumber\\
         &&\!\!\!\!\!\!\!\!\left.-\frac{S_m }{S_{m+1}}q_{m+1}(y)+y q_m(y)- \left(\alpha+m+1 +\frac{S_m}{S_{m+1}}-\frac{S_{m-1}}{S_m}\right)q_m(y)\right)\nonumber\\
        &&\!\!\!\!\!\!\!\!+\frac{S_{m-1}}{S_m}\left(\left(p_m(z)-p_m(x)\right)q_{m-1}(y)+\left(p_{m-1}(z)-p_{m-1}(x)\right)q_{m}(y)\right)\nonumber\\  &&\!\!\!\!\!\!\!\!-y\sum_{k=0}^{m-1}(p_k(z)-p_k(x))q_k(y).
\end{eqnarray}
The left-hand side of (\ref{eq:zx}) can be rewritten into
\begin{eqnarray}
     &&\sum_{k=0}^{m-1}z (p_k(z)-p_k(x))q_k(y)-z(p_m(z)-p_m(x)) \!\sum_{k=0}^{m-1}q_k(y)\nonumber\\
     &&+(z-x)\sum_{k=0}^{m-1}p_k(x)q_k(y) - (z-x)p_m(x) \!\sum_{k=0}^{m-1}q_k(y)\nonumber.
\end{eqnarray}
Multiplying 
\begin{equation}
    \frac{1}{z-x}x^\alpha e^{-x}
\end{equation}
 on both sides of (\ref{eq:zx}) and integrating over $x\in[0,\infty)$, the resulting left-hand side is
\begin{eqnarray}
    &&z\sum_{k=0}^{m-1}\tilde{P}_{k-1}(z)q_k(y)-z\tilde{P}_{m-1}(z) \!\sum_{k=0}^{m-1}q_k(y)\nonumber\\
    &&+\sum_{k=0}^{m-1}\int_0^\infty p_k(x)x^\alpha e^{-x} \dd x q_k(y) -\int_0^\infty p_m(x)x^\alpha e^{-x} \dd x \!\sum_{k=0}^{m-1}q_k(y)\nonumber\\
      &=& z\sum_{k=0}^{m-1}\tilde{P}_{k-1}(z)q_k(y)-z\tilde{P}_{m-1}(z) \!\sum_{k=0}^{m-1}q_k(y),
\end{eqnarray}
where we have used the fact (\ref{eq:pint}) such that the last two integrals cancel with each other, and we recall from (\ref{eq:ptilde}) that
\begin{equation}
        \tilde{P}_{k-1}(z)= \int_0^{\infty}\frac{p_k(z)-p_k(x)}{z-x}x^\alpha e^{-x}\dd x.
    \end{equation}
The resulting right-hand side is
\begin{eqnarray}
    &&\frac{1}{2 (\alpha +m+1)}\left(\frac{S_{m-1}}{S_m}\tilde{P}_{m-2}(z)-\frac{S_m }{S_{m+1}}\tilde{P}_{m}(z)\right.\nonumber\\
            &&\left.-\left(\alpha +m+1-\frac{S_m}{S_{m+1}}+\frac{S_{m-1}}{S_m}\right)\tilde{P}_{m-1}(z)\right)\left(\frac{S_{m-1}}{S_m}q_{m-1}(y)\right.\nonumber\\
         &&\left.-\frac{S_m }{S_{m+1}}q_{m+1}(y)+y q_m(y)- \left(\alpha+m+1 +\frac{S_m}{S_{m+1}}-\frac{S_{m-1}}{S_m}\right)q_m(y)\right)\nonumber\\
        &&+\frac{S_{m-1}}{S_m}\left(\tilde{P}_{m-1}(z)q_{m-1}(y)+\tilde{P}_{m-2}(z)q_{m}(y)\right)-y\sum_{k=0}^{m-1}\tilde{P}_{k-1}(z)q_k(y).
\end{eqnarray}

By replacing all the $\tilde{P}_{k-1}(z)$ terms on both sides with  $f_1(z)p_k(z)-P_k(z)$ using (\ref{eq:pkptilde}), one clearly has the $f_1(z)p_k(z)$ terms on left-hand side equals those on the right-hand side from (\ref{CD}), and the terms left on both sides are 
\begin{eqnarray}
  && z\sum_{k=0}^{m-1}P_{k}(z)q_k(y)-zP_{m}(z) \!\sum_{k=0}^{m-1}q_k(y)\nonumber\\
   &=&\frac{1}{2 (\alpha +m+1)}\left(\frac{S_{m-1}}{S_m}P_{m-1}(z)-\frac{S_m }{S_{m+1}}P_{m+1}(z)\right.\nonumber\\
            &&\left.-\left(\alpha +m+1-\frac{S_m}{S_{m+1}}+\frac{S_{m-1}}{S_m}\right)P_{m}(z)\right)\left(\frac{S_{m-1}}{S_m}q_{m-1}(y)\right.\nonumber\\
         &&\left.-\frac{S_m }{S_{m+1}}q_{m+1}(y)+y q_m(y)- \left(\alpha+m+1 +\frac{S_m}{S_{m+1}}-\frac{S_{m-1}}{S_m}\right)q_m(y)\right)\nonumber\\
        &&+\frac{S_{m-1}}{S_m}\left(P_{m}(z)q_{m-1}(y)+P_{m-1}(z)q_{m}(y)\right)-y\sum_{k=0}^{m-1}P_{k}(z)q_k(y).
\end{eqnarray}
Using (\ref{eq:qcheck}) to rewrite the second term on the left-hand side and recollecting the terms completes the proof of (\ref{CDPq}).

The proof of (\ref{CDpQ}) from (\ref{CD}) is similar to that of (\ref{CDPq}) from (\ref{CD}). Use (\ref{eq:phat}) and apply the same method by differencing the $z$-copy and $y$-copy, multiply~(\ref{eq:zy}) and integrate over $y\in[0,\infty)$ with terms rewritten by~(\ref{eq:qtilde}) and~(\ref{eq:f2}), we complete the proof of~(\ref{CDpQ}). Similar to the derivation of~(\ref{CDPq}) from~(\ref{CD}), we can obtain~(\ref{CDPQ}) from~(\ref{CDpQ}) by subtracting the left-hand side of~(\ref{eq:Qcheck}) on the left-hand side of~(\ref{CDpQ}) and rewriting the difference of the $z$-copy and $x$-copy with $\tilde{P}_k(z)$ and $f_1(z)$, while keeping in mind the identity~(\ref{eq:pkptilde}) and the fact~(\ref{eq:pint}). This completes the proof.
\end{proof}

The above results on Christoffel-Darboux formulas for Cauchy-Laguerre biorthogonal kernels give summation-free representations of the Cauchy-Laguerre biorthogonal kernels in terms of certain terms of $p_k(x)$, $q_k(x)$, $P_k(x)$, $Q_k(x)$, whose properties are studied and listed above. We will utilize such summation-free form of Cauchy-Laguerre biorthogonal kernels rather than the summation forms (\ref{eq:01}), (\ref{eq:10}) in the calculation of spectral moments. Such results allow us to represent the derivative of Cauchy-Laguerre biorthogonal kernels in terms of certain derivatives of $p_k(x)$, $q_k(x)$, $P_k(x)$, $Q_k(x)$, which could be recycled to certain $p_k(x)$, $q_k(x)$, $P_k(x)$, $Q_k(x)$ via (\ref{eq:strucrel}) and simplified to a few terms via (\ref{eq:fourtermrec}).

\subsection{Derivative of kernels and recurrence relation for spectral moments}
\label{sec:derivrecspec}
         \begin{proposition}\label{Prop:DerivK}The derivative of Cauchy-Laguerre biorthogonal kernels $K_{01}(x,x)$, $K_{10}(x,x)$ admits 
         \begin{subequations}\label{eq:deriv}
         \begin{equation}\label{K01deriv}
            \!\!\!\!\!\!\!\!  \frac{\dd}{\dd x}x K_{01}(x,x)=\frac{S_{m-1}}{S_{m}}x^{\alpha } e^{-x}  \left(p_m(x) Q_{m-1}(-x)+p_{m-1}(x) Q_m(-x)\right)
         \end{equation}
             \begin{equation}\label{K10deriv}
              \!\!\!\!\!\!\!\!   \frac{\dd}{\dd x}x K_{10}(x,x)=\frac{S_{m-1}}{S_{m}}x^{\alpha +1} e^{-x}  \left(P_m(-x) q_{m-1}(x)+P_{m-1}(-x) q_m(x)\right).
         \end{equation}
         \end{subequations}
         Moreover, the one point kernel admits a representation involving only $q_k(x)$ and $Q_k(-x)$,
            \begin{eqnarray}\label{K01+K10}
                   &&\frac{\dd}{\dd x}\left(x \left( K_{01}(x,x)+K_{10}(x,x)\right)\right)   \nonumber\\
                   &=& -x^{\alpha }e^{-x} \Bigg(a_{1}(q_{m-1}(x) Q_{m-2}(-x)-q_{m-2}(x) Q_{m-1}(-x))\nonumber\\
                   &&+a_{2}( q_{m-1}(x) Q_m(-x)-q_m(x) Q_{m-1}(-x))\nonumber\\&&+a_3(q_{m+1}(x) Q_m(-x)- q_m(x) Q_{m+1}(-x))\nonumber\\
                   &&-x \left(a_4 q_{m-1}(x) Q_{m-1}(-x)+a_5q_{m}(x) Q_{m}(-x)\right)\Bigg),
            \end{eqnarray}
            where 
            \begin{eqnarray*}\label{a1a5}
           a_{1}&=& \frac{(m-1) m (2 \alpha +m) (2 \alpha +m+1) }{4 (\alpha +m) (2 \alpha +2 m-1) (2 \alpha +2 m+1)}\\
               a_2&=& \frac{m (2 \alpha +m+1) \left(4 \alpha ^2+4 \alpha +3 m^2+6 \alpha  m+3 m\right)}{4 (\alpha +m) (\alpha +m+1) (2 \alpha +2 m+1)}\\
            a_3&=&   \frac{m (m+1) (2 \alpha +m+1) (2 \alpha +m+2) }{4 (\alpha +m+1) (2 \alpha +2 m+1) (2 \alpha +2 m+3)}\\
            a_4&=&\frac{m (2 \alpha +m+1) }{(\alpha +m) (2 \alpha +2 m+1)}\\
            a_5&=&\frac{m (2 \alpha +m+1) }{(\alpha +m+1) (2 \alpha +2 m+1)}.
            \end{eqnarray*}
         \end{proposition}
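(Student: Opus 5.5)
Our plan is to pass from the summation forms (\ref{eq:01}), (\ref{eq:10}) to summation-free expressions on the diagonal via Proposition~\ref{Prop:CD}, and then differentiate using Lemma~\ref{lemma:strucrel}.

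\textbf{Diagonal kernels.} Take $K_{01}(x,x)=-x^\alpha e^{-x}\sum_{k}h_k^{-1}p_k(x)Q_k(-x)$. With the normalization implied by (\ref{eq:pint}), (\ref{eq:qint}) and (\ref{eq:cjkdkj}), the factors $h_k$ are absorbed, as in Proposition~\ref{Prop:CD}. Set $y=-x$ in (\ref{CDpQ}). Then the left-hand side $(x+y)\sum_k p_k(x)Q_k(y)$ vanishes, but this alone does not give the diagonal value. Instead we use the identity
\[
xK_{01}(x,x)=-x^{\alpha}e^{-x}\lim_{y\to -x}\frac{(x+y)\sum_k p_k(x)Q_k(y)}{x+y}\cdot x ,
\]
so that $\sum_k p_k(x)Q_k(-x)$ is the $y$-derivative at $y=-x$ of the right-hand side of (\ref{CDpQ}). (The right-hand side of (\ref{CDpQ}) must itself vanish at $y=-x$.) This expresses $K_{01}(x,x)$ through the quantities $p_j(x)$, $Q_j(-x)$, $Q_j'(-x)$ for $j=m-1,m,m+1$. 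Using (\ref{eq:strucQ}), each $yQ_j'(y)$ becomes a combination of $Q$'s. For this reason the factor $x$ in $xK_{01}$ is natural: $x\,\partial_y$ at $y=-x$ equals $-y\partial_y$.

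\textbf{Differentiation.} We then differentiate $xK_{01}(x,x)$ in $x$. The factor $x^\alpha e^{-x}$ produces $(\alpha/x-1)$ terms. The derivatives $xp_j'$ and $xQ_j'(-x)$ are rewritten by (\ref{eq:strucp}) and (\ref{eq:strucQ}). The resulting expression is bilinear in $\{p_{m-2},\dots,p_{m+2}\}\times\{Q_{m-2},\dots,Q_{m+2}\}$, with a possible $x$ factor. We reduce the indices $m\pm2$ and the products $xp_j$ using the four-term relations of Lemma~\ref{lemma:fourtermrec}. The expected outcome is the telescoped form (\ref{K01deriv}); all other terms should cancel because the right-hand side of (\ref{CDpQ}) vanishes on $y=-x$. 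The same procedure with (\ref{CDPq}), (\ref{eq:strucq}) and (\ref{eq:strucP}) gives (\ref{K10deriv}).

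\textbf{Symmetrized form.} For (\ref{K01+K10}) we need to eliminate $p_j$ and $P_j$ in favour of $q_j$ and $Q_j$. The explicit forms (\ref{pqpoly}) suggest a connection of the type
\[
p_j(x)\propto \frac{q_j(x)}{\alpha+j+1}-\frac{q_{j-1}(x)}{\alpha+j}
\]
(up to constants, with an analogous $x$-weighted identity), compatible with the shifted combinations in (\ref{qrecrel}). We would first verify such a relation coefficientwise from (\ref{eq:pexplicit}) and (\ref{eq:qexplicit}). We would then transfer it to $x P_j(-x)$ versus $Q_j(-x)$ through the Cauchy-transform procedure used to prove (\ref{Precrel}), that is, via (\ref{eq:pkptilde}) and (\ref{eq:qkqtilde}). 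Substituting these relations into (\ref{K01deriv}) and (\ref{K10deriv}) and collecting terms should yield the antisymmetric pairs with coefficients $a_1,a_2,a_3$ and the diagonal $x$-terms $a_4,a_5$.

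\textbf{Main obstacle.} The hardest step is the cancellation bookkeeping in the differentiation stage, namely confirming that everything except the two cross terms with coefficient $S_{m-1}/S_m$ disappears. It is also delicate to find the correct $p\leftrightarrow q$ and $P\leftrightarrow Q$ connection formulas for the last part. We would check each stage symbolically for small $m$ before trusting the general identities.
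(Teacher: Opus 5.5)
\textbf{The gap is in the last step.} The connection formula you plan to substitute, $p_j\propto q_j/(\alpha+j+1)-q_{j-1}/(\alpha+j)$, is false. In fact no two-term relation with constant coefficients can work.

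From (\ref{eq:pexplicit}), (\ref{eq:qexplicit}) one checks that $p_1=q_1+2q_0$, both sides being $-\frac{\sqrt{2}}{\Gamma(\alpha+2)}\bigl(2(\alpha+1)^2-(2\alpha+3)x\bigr)$. This is not a multiple of $q_1/(\alpha+2)-q_0/(\alpha+1)=-\frac{\sqrt{2}(2\alpha+3)}{\Gamma(\alpha+3)}(\alpha+2-x)$. In general $p_j=q_j+2\sum_{k=0}^{j-1}q_k$, which involves every lower $q_k$. The shifted combinations on the left of (\ref{precrel}), (\ref{qrecrel}) are therefore not a $p\leftrightarrow q$ dictionary, and the substitution step has nothing valid to substitute.

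What the coefficientwise comparison you propose actually gives is a differential relation. The $x^i$-coefficient of $p_j$ is $(\alpha+i+1)$ times that of $q_j/(\alpha+j+1)$, so $(\alpha+j+1)p_j=\bigl(x\frac{\dd}{\dd x}+\alpha+1\bigr)q_j$. Rewrite this with (\ref{eq:strucq}) and compare with (\ref{eq:qcheck}). This gives $p_j+q_j=2\sum_{k=0}^{j}q_k$, which is the paper's (\ref{pqcheck}). Inserting (\ref{eq:qcheck}) once more yields (\ref{pintermsq}), i.e. $p_{m-1}$ through $xq_m,q_{m-1},q_m,q_{m+1}$. A finite expression necessarily carries an $xq$ term.

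For the Cauchy transforms, use $\frac{z}{z-v}=1+\frac{v}{z-v}$ together with (\ref{eq:pint}). This gives $zP_j(z)+Q_j(z)=2\sum_{k=0}^{j}Q_k(z)+\sqrt{2}$, and (\ref{eq:Qcheck}) then gives (\ref{PintermsQ}). With these relations, followed by the reductions (\ref{qrecrel}), (\ref{Qrecrel}), substituting into (\ref{K01deriv}), (\ref{K10deriv}) produces the $a_1,\dots,a_5$ form.

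\textbf{The first two identities.} Your route to (\ref{K01deriv}), (\ref{K10deriv}) is the paper's: the limit $y\to-x$ in (\ref{CDpQ}), (\ref{CDPq}), then Lemma~\ref{lemma:strucrel} and Lemma~\ref{lemma:fourtermrec}. Like the paper, you assert rather than exhibit the final cancellation. The vanishing of the right-hand side of (\ref{CDpQ}) at $y=-x$ justifies the limit, but it does not by itself explain why everything except the two cross terms disappears.

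The obstacle you anticipate goes away if you bypass the Christoffel--Darboux formulas. For $F(x,y)=\sum_{k=0}^{m-1}p_k(x)Q_k(y)$ one has $x\frac{\dd}{\dd x}F(x,-x)=\bigl[(x\partial_x+y\partial_y)F\bigr]_{y=-x}$. Apply (\ref{eq:strucp}), (\ref{eq:strucQ}) termwise: the diagonal pieces $\pm\bigl(\frac{S_k}{S_{k+1}}-\frac{S_{k-1}}{S_k}\bigr)p_kQ_k$ cancel, and the off-diagonal ones telescope. Hence $(x\partial_x+y\partial_y)F=(x-\alpha-1)F-\frac{S_{m-1}}{S_m}\bigl(p_{m-1}(x)Q_m(y)+p_m(x)Q_{m-1}(y)\bigr)$. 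The $(x-\alpha-1)F$ term is exactly cancelled by the derivative of $x^{\alpha+1}e^{-x}$, which gives (\ref{K01deriv}) at once. The same computation with $x^{\alpha+2}e^{-x}$, (\ref{eq:strucP}) and (\ref{eq:strucq}), applied to $H(u,y)=\sum_{k}P_k(u)q_k(y)$, leaves $(y-\alpha-2)H$ and gives (\ref{K10deriv}).
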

         \begin{proof}
Denote 
\begin{eqnarray}
   P_k'(x)= \frac{\dd}{\dd x}P_k(x) .
\end{eqnarray} Let $y\rightarrow-x$ in (\ref{CDPq}),
     \begin{eqnarray}
      \fl &&\sum _{k=0}^{m-1} P_k(-x) q_k(x)\nonumber \\
      \fl &=&\frac{1}{2 (\alpha +m)}\left(-\left(\frac{4 \alpha ^2+2 \alpha +3 m^2+6 \alpha  m+m}{4 \alpha +4 m+2}+\frac{(m-1) (2 \alpha +m)}{4 \alpha +4 m-2}\right.\right.\fl \nonumber \\ && \left.\left.+x\right)q_{m-1}(x) +\frac{(m-1) (2 \alpha +m) q_{m-2}(x)}{-4 \alpha -4 m+2}+\frac{m (2 \alpha +m+1) q_m(x)}{4 \alpha +4 m+2}\right)\fl \nonumber\\&& \times \left(\left(x-\left(\alpha+m+\frac{1}{2} -\frac{m^2}{4 \alpha +4 m+2}+\frac{(m-1)^2}{4 \alpha +4 m-2}\right)\right) P_{m-1}'(-x)\right.\fl \nonumber \\ &&\left.+\frac{(m-1) (2 \alpha +m) P_{m-2}'(-x)}{-4 \alpha -4 m+2}+\frac{m (2 \alpha +m+1) P_m'(-x)}{4 \alpha +4 m+2}-P_{m-1}(-x)\right)\fl \nonumber\\ &&+\frac{m (2 \alpha +m+1) }{4 \alpha +4 m+2}\left(q_m(x) P_{m-1}'(-x)\right.\left.+q_{m-1}(x) P_m'(-x)\right),
       \end{eqnarray}
and use (\ref{eq:strucP}), (\ref{Qrecrel}), (\ref{qrecrel}), we deduce that
         \begin{subequations}
\begin{eqnarray}\label{cflPq}
    && \sum _{k=0}^{m-1} P_k(-x) q_k(x)\nonumber\\
     &=& \frac{m (2 \alpha +m+1)}{4 x} \Bigg(\frac{\left(2 \alpha  (\alpha +1)+m(2 \alpha + m+1)\right) P_{m-1}(-x) q_m(x)}{(\alpha +m) (2 \alpha +2 m+1)}\nonumber\\&&-\frac{\left(2 \alpha  (\alpha +1)+m^2+2 \alpha  m+m\right) P_m(-x) q_{m-1}(x)}{(\alpha +m+1) (2 \alpha +2 m+1)}\nonumber\\&&+\frac{(m-1) (\alpha +m+1) (2 \alpha +m) P_{m-1}(-x) q_{m-2}(x)}{(\alpha +m) (2 \alpha +2 m-1) (2 \alpha +2 m+1)}\nonumber\\&&+\frac{(1-m) (2 \alpha +m) P_{m-2}(-x) q_{m-1}(x)}{4 (\alpha +m)^2-1}\nonumber\\&&+\frac{2 (m-1) (2 \alpha +m) P_{m-1}(-x) q_{m-1}(x)}{(\alpha +m) (2 \alpha +2 m-1) (2 \alpha +2 m+1)}\nonumber\\&&+\frac{2 (m+1) (2 \alpha +m+2) P_m(-x) q_m(x)}{(\alpha +m+1) (2 \alpha +2 m+1) (2 \alpha +2 m+3)}\nonumber\\&&+\frac{(m+1) (2 \alpha +m+2) P_{m+1}(-x) q_m(x)}{(2 \alpha +2 m+1) (2 \alpha +2 m+3)}\nonumber\\&&-\frac{(m+1) (\alpha +m) (2 \alpha +m+2) P_m(-x) q_{m+1}(x)}{(\alpha +m+1) (2 \alpha +2 m+1) (2 \alpha +2 m+3)}\Bigg)\nonumber\\&&+\frac{m (2 \alpha +m+1) P_{m-1}(-x) q_{m-1}(x)}{2 (\alpha +m)}+\frac{m (2 \alpha +m+1) P_m(-x) q_m(x)}{2 (\alpha +m+1)}\nonumber\\&&-\frac{m (2 \alpha +m+1) }{4 \alpha +4 m+2}\left(P_m(-x) q_{m-1}(x)+P_{m-1}(-x) q_m(x)\right).
\end{eqnarray}
Similarly we have
    \begin{eqnarray}\label{cflpQ}
     && \sum _{k=0}^{m-1} p_k(x) Q_k(-x)\nonumber\\
   &=&-\frac{m (2 \alpha +m+1)}{4 x} \Bigg(\frac{\left(2 \alpha  (\alpha +1)+m(2 \alpha  +m+1)\right) p_{m-1}(x) Q_m(-x)}{(\alpha +m) (2 \alpha +2 m+1)}\nonumber\\&&-\frac{\left(2 \alpha  (\alpha +1)+m^2+2 \alpha  m+m\right) p_m(x) Q_{m-1}(-x)}{(\alpha +m+1) (2 \alpha +2 m+1)}\nonumber\\&&+\frac{(m-1) (\alpha +m+1) (2 \alpha +m) p_{m-1}(x) Q_{m-2}(-x)}{(\alpha +m) (2 \alpha +2 m-1) (2 \alpha +2 m+1)}\nonumber\\&&+\frac{(1-m) (2 \alpha +m) p_{m-2}(x) Q_{m-1}(-x)}{4 (\alpha +m)^2-1}\nonumber\\&&+\frac{2 (m-1) (2 \alpha +m) p_{m-1}(x) Q_{m-1}(-x)}{(\alpha +m) (2 \alpha +2 m-1) (2 \alpha +2 m+1)}\nonumber\\&&+\frac{2 (m+1) (2 \alpha +m+2) p_m(x) Q_m(-x)}{(\alpha +m+1) (2 \alpha +2 m+1) (2 \alpha +2 m+3)}\nonumber\\&&+\frac{(m+1) (2 \alpha +m+2) p_{m+1}(x) Q_m(-x)}{(2 \alpha +2 m+1) (2 \alpha +2 m+3)}\nonumber\\&&-\frac{(m+1) (\alpha +m) (2 \alpha +m+2) p_m(x) Q_{m+1}(-x)}{(\alpha +m+1) (2 \alpha +2 m+1) (2 \alpha +2 m+3)}\Bigg)\nonumber\\&&+\frac{m (2 \alpha +m+1) p_{m-1}(x) Q_{m-1}(-x)}{2 (\alpha +m)}+\frac{m (2 \alpha +m+1) p_m(x) Q_m(-x)}{2 (\alpha +m+1)}\nonumber\\&&-\frac{m (2 \alpha +m+1) }{4 \alpha +4 m+2}\left(p_m(x) Q_{m-1}(-x)+p_{m-1}(x) Q_m(-x)\right).
\end{eqnarray}
\end{subequations}
             Take derivatives in (\ref{cflpQ}) and (\ref{cflPq}), respectively, and use (\ref{eq:strucp}), (\ref{eq:strucQ}), (\ref{precrel}), (\ref{Qrecrel}) and (\ref{eq:strucq}), (\ref{eq:strucP}), (\ref{qrecrel}), (\ref{Precrel}), respectively, we complete the proof of (\ref{K01deriv}) and (\ref{K10deriv}), respectively.
        
     For the derivation of (\ref{K01+K10}), Notice that
            \begin{equation}
         2 \sum _{k=0}^{m-1} q_k(x)=q_{m-1}(x)+p_{m-1}(x),\label{pqcheck}
    \end{equation}
    and insert (\ref{eq:qcheck}) into~(\ref{pqcheck}), we can write $p_k(x)$ in terms of $q_k(x)$,
             \begin{subequations}
               \begin{eqnarray}\label{pintermsq}
         p_{m-1}(x)&=&\frac{1}{\alpha+m+1}x q_{m}(x)+\left(\frac{S_{m-1}}{S_m(\alpha+m+1)}-1\right)q_{m-1}(x)\nonumber\\
         &&+\left(\frac{S_{m-1}}{S_m(\alpha+m+1)}-\frac{S_{m}}{S_{m+1}(\alpha+m+1)} -1\right)q_{m}(x)\nonumber\\
         &&-\frac{S_{m}}{S_{m+1}(\alpha+m+1)} q_{m+1}(x).
    \end{eqnarray}
    Similarly, $P_k(x)$ can be written in terms of $Q_k(x)$,
               \begin{eqnarray}\label{PintermsQ}
               xP_{m-1}(x)&=&\frac{1}{\alpha+m+1}x Q_{m}(x)+\left(\frac{S_{m-1}}{S_m(\alpha+m+1)}-1\right)Q_{m-1}(x)\nonumber\\
         &&+\left(\frac{S_{m-1}}{S_m(\alpha+m+1)}-\frac{S_{m}}{S_{m+1}(\alpha+m+1)} -1\right)Q_{m}(x)\nonumber\\
         &&-\frac{S_{m}}{S_{m+1}(\alpha+m+1)} Q_{m+1}(x).
    \end{eqnarray}
        \end{subequations}
        Insert (\ref{pintermsq}) and (\ref{PintermsQ}) into (\ref{K01deriv}) and (\ref{K10deriv}), respectively, and use (\ref{qrecrel}), (\ref{Qrecrel}), we complete the proof of (\ref{K01+K10}).
        \end{proof}

\begin{theorem}\label{Theorem:Rec}
    The  spectral moments of the Bures-Hall ensemble~(\ref{eq:BHu}) satisfy a three-term recurrence relation
    \begin{eqnarray}\label{specrecrel}
g_1(k)\kappa\!\left(R_{k+2}\right)=g_2(k)\kappa\!\left(R_{k}\right)+g_3(k)\kappa\!\left(R_{k-2}\right),
    \end{eqnarray}
     where the coefficients $g_1(k)$, $g_2(k)$, $g_3(k)$ are in \ref{appendix:A}.
\end{theorem}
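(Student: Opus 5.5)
We follow the strategy of~\cite{huang2025cumulantstructuresentanglemententropy}: work with the derivative representation (\ref{K01+K10}) of Proposition~\ref{Prop:DerivK}, and use integration by parts to turn it into a relation among integrals of $x^k$ against bilinear products of $q_j(x)$ and $Q_l(-x)$. By (\ref{eq:Rk}) and (\ref{h1}),
\begin{equation*}
\kappa(R_k)=\frac{1}{2}\int_0^\infty x^{k}\left(K_{01}(x,x)+K_{10}(x,x)\right)\dd x .
\end{equation*}
Writing $x^{k}=x^{k-1}\cdot x$ and integrating by parts against $\frac{\dd}{\dd x}\left(x(K_{01}+K_{10})\right)$ gives
\begin{equation*}
\kappa(R_k)=-\frac{1}{2k}\int_0^\infty x^{k}\,\frac{\dd}{\dd x}\left(x\left(K_{01}(x,x)+K_{10}(x,x)\right)\right)\dd x .
\end{equation*}
The boundary terms vanish for $k$ in the range where the moments converge; we extend to real $k$ by analytic continuation. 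Substituting (\ref{K01+K10}) expresses $\kappa(R_k)$ as a fixed finite combination of the integrals
\begin{equation*}
I_k(j,l)=\int_0^\infty x^{k+\alpha}e^{-x}q_j(x)Q_l(-x)\dd x,\qquad |j-l|\le1,\ j,l\in\{m-2,\dots,m+1\},
\end{equation*}
together with the $x^{k+1}$ terms coming from $a_4,a_5$.

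Next we need closure relations for the family $I_k(j,l)$. We apply integration by parts a second time, to $x^{k}\frac{\dd}{\dd x}\left(x^{\alpha+1}e^{-x}q_j(x)Q_l(-x)\right)$. The structure relations (\ref{eq:strucq}) and (\ref{eq:strucQ}) (evaluated at $-x$) convert $x q_j'$ and $xQ_l'(-x)$ back into shifts $q_{j\pm1}$, $Q_{l\pm1}$, plus an extra factor $x$ from the $xq_m$ term in (\ref{eq:strucq}). This yields linear identities linking the $k$- and $(k+1)$-level integrals. Multiplication by $x$ is then removed with the four-term recurrences (\ref{qrecrel}) and (\ref{Qrecrel}), which keep the indices in a bounded window around $m$. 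Finally, the Christoffel--Darboux formula (\ref{CDpQ}) at $y=-x$, together with the $a_4,a_5$ diagonal terms, lets us express $x^{k+1}$-weighted diagonal products through the kernel itself. In this way each of $\kappa(R_{k+2})$, $\kappa(R_k)$ and $\kappa(R_{k-2})$ becomes a linear form, with coefficients rational in $k,m,\alpha$, in a common finite basis of integrals $I_k(j,l)$.

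The parity structure should explain the step of $2$ in the recurrence. The weight $x+y$ at $y=-x$ collapses the kernel, and (\ref{K01+K10}) is antisymmetric in the pair $(q_j,Q_l)$ versus $(q_l,Q_j)$. We would therefore track symmetric and antisymmetric combinations separately and expect only even shifts of $k$ to close.

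The recurrence (\ref{specrecrel}) would then follow by linear elimination. We would eliminate the auxiliary basis integrals among the three expressions, most naturally by solving a small linear system symbolically, and read off $g_1,g_2,g_3$ as the resulting coefficients. This elimination is the main obstacle: proving that the auxiliary integrals cancel exactly, leaving a \emph{three}-term relation, requires heavy bookkeeping with the explicit $S_m$ ratios and the $r_{m,i},s_{m,i}$ coefficients. If the elimination does not close, the first fallback is to enlarge the basis by one more integration-by-parts identity, since the count of unknowns versus identities should then balance.

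As a sanity check, before finalizing we would verify the resulting $g_i$ against small $m$, for example $m=1$, where the one-point density is explicit, and against integer $k$ values computed from (\ref{pqpoly}).
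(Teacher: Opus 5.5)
Your outline has the right ingredients: integration by parts against (\ref{K01+K10}), derivatives of $x\cdot x^{\alpha}e^{-x}q_j(x)Q_l(-x)$ from the structure relations and four-term recurrences, then elimination. But it omits the step that makes the elimination produce a \emph{three}-term relation, which you yourself defer as ``the main obstacle''. In the paper, repeated integration by parts collapses every term onto the single symmetric integrand $\Sigma_m(x)=q_{m+1}(x)Q_m(-x)+q_m(x)Q_{m+1}(-x)$. Writing $J(s)=\int_0^\infty x^{s}(-x^{\alpha}e^{-x})\Sigma_m(x)\dd x$, (\ref{rklast}) says $\kappa(R_k)=\frac{1}{2}\left(b_1(k)J(k-1)+b_2(k)J(k+1)\right)$, and (\ref{intid}) says this one sequence obeys $J(k-1)=c_1(k)J(k+1)+c_2(k)J(k+3)$. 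So $\kappa(R_{k-2})$, $\kappa(R_k)$, $\kappa(R_{k+2})$ involve only $J(k-3),J(k-1),J(k+1),J(k+3)$. Applying (\ref{intid}) at $k$ and at $k-2$ puts all three in the span of $J(k+1),J(k+3)$, and three vectors in a two-dimensional space give (\ref{specrecrel}). Your ``common finite basis'' of integrals $I_s(j,l)$ has no such dimension bound. Unless the three moments are shown to lie in a space of dimension at most two, solving a linear system yields no relation, and ``enlarge the basis by one more integration-by-parts identity'' does not supply one. The parity heuristic does not either: (\ref{K01+K10}) is not antisymmetric (the $a_4,a_5$ terms are diagonal), and the integrand the reduction ends on is the symmetric combination.

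Several mechanical claims are also off, and they affect which basis is possible. In $\frac{\dd}{\dd x}\left(x\cdot x^{\alpha}e^{-x}q_j(x)Q_l(-x)\right)$, the $xq_j$ term of (\ref{eq:strucq}) cancels the $-x$ coming from $e^{-x}$, and (\ref{eq:strucQ}) at $-x$ contributes no factor $x$. So this integration by parts relates integrals at the \emph{same} power of $x$, with $k$ entering the coefficients; compare (\ref{eq:mm}). The recurrences (\ref{qrecrel}), (\ref{Qrecrel}) remove $x$ only from the specific differences $x\left(\frac{q_{j+1}}{\alpha+j+2}-\frac{q_j}{\alpha+j+1}\right)$. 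A lone $xq_m$, as in the $a_5$ term, expands over all of $q_0,\dots,q_{m+1}$, so the factor $x$ cannot be removed within a finite window. In practice these recurrences are used the other way round: they trade indices that leave the window (e.g.\ $m-3$ or $m+2$, produced when differentiating the $a_1$ and $a_3$ terms) for $x$-weighted in-window terms, and that is what moves between powers of $x$. Finally, at $y=-x$ the left-hand side of (\ref{CDpQ}) vanishes. It therefore gives an algebraic constraint among $p_{m-1},p_m,p_{m+1}$ and $Q_{m-1},Q_m,Q_{m+1}$ at $-x$, not a representation of diagonal products through the kernel. To reach the kernel you must divide by $x+y$ and let $y\to-x$, which produces the derivative terms handled in (\ref{cflpQ}).
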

\begin{proof}
Perform integration by parts and utilize (\ref{K01+K10}), we have  for $k\neq 0$,
\begin{eqnarray}\label{Rkint}
    \kappa(R_k)&=&\frac{1}{2k}\int_0^{\infty}\frac{\dd}{\dd x}\left(x^k \left( K_{01}(x,x)+K_{10}(x,x)\right)\right) \dd x\nonumber \\
        &=&\frac{1}{2k}\int_0^{\infty} x^{k+\alpha }e^{-x} \Bigg(a_{1}(q_{m-1}(x) Q_{m-2}(-x)-q_{m-2}(x) Q_{m-1}(-x))\nonumber\\
                   &&+a_{2}( q_{m-1}(x) Q_m(-x)-q_m(x) Q_{m-1}(-x))\nonumber\\&&+a_3(q_{m+1}(x) Q_m(-x)- q_m(x) Q_{m+1}(-x))\nonumber\\
                   &&-x \left(a_4 q_{m-1}(x) Q_{m-1}(-x)+a_5q_{m}(x) Q_{m}(-x)\right)\Bigg) \dd x,
\end{eqnarray}
where the coefficients $a_1$ to $a_5$ are as above (\ref{a1a5}).
On the other hand, we utilize the results above on $q_k(x)$ and $Q_k(x)$ to find the derivative of certain terms in integrand of (\ref{Rkint}). For example, using (\ref{eq:strucq}), (\ref{eq:strucQ}), (\ref{qrecrel}), (\ref{Qrecrel}), we have 
        \begin{eqnarray}\label{eq:mm}
            &&\frac{\dd}{\dd x}\left(x(-x^{\alpha } e^{-x}q_{m}(x)Q_{m}(-x))\right)\nonumber\\
            &=&-x^{\alpha }e^{-x}\Bigg(-\frac{2 \left(2 \alpha ^2+3 \alpha +m^2+2 (\alpha +1) m+1\right) q_m(x) Q_m(-x)}{(2 \alpha +2 m+1) (2 \alpha +2 m+3)}\nonumber\\
            &&+\frac{m (2 \alpha +m+1) }{4 \alpha +4 m+2}(q_m(x) Q_{m-1}(-x)+q_{m-1}(x) Q_m(-x))\nonumber\\&&-\frac{(m+1) (2 \alpha +m+2) }{4 \alpha +4 m+6}(q_{m+1}(x) Q_m(-x)+q_m(x) Q_{m+1}(-x))\Bigg).
        \end{eqnarray}
        Do integration by parts between $x^{k-1}$ and $x(-x^{\alpha }e^{-x}q_m(x)Q_m(-x))$, we can represent the integral \begin{eqnarray}
            \int_0^{\infty}x^k \left(-x^{\alpha }e^{-x}\right)q_{m}(x) Q_m(-x)\dd x
        \end{eqnarray} in terms of the integral of other integrands on the right-hand side of (\ref{eq:mm}). Similarly, by doing integration by parts between $x^{k-1}$ and each of the following 7 expressions

        \begin{itemize}
            \item$ x (-x^{\alpha }e^{-x}q_{m-1}(x) Q_{m-1}(-x))$
        \item   $ x(-x^{\alpha }e^{-x}(q_{m-1}(x) Q_m(-x)-q_m(x) Q_{m-1}(-x)))$
  \item  $x (-x^{\alpha }e^{-x}(q_{m-1}(x) Q_{m-2}(-x)-q_{m-2}(x) Q_{m-1}(-x))) $
  \item  $ x (-x^{\alpha }e^{-x}(q_{m+1}(x) Q_m(-x)- q_m(x) Q_{m+1}(-x)))$
   \item     $ x (-x^{\alpha }e^{-x}(q_{m-1}(x) Q_m(-x)+q_m(x) Q_{m-1}(-x)))$
 \item   $  x(-x^{\alpha }e^{-x}(q_{m-1}(x) Q_{m-2}(-x)+q_{m-2}(x) Q_{m-1}(-x)))$
 \item   $ x (-x^{\alpha }e^{-x}(q_{m+1}(x) Q_m(-x)+ q_m(x) Q_{m+1}(-x)))$,
        \end{itemize}
we represent the integral of each term listed above by the remaining terms on the right-hand side of their derivatives, respectively, and do the process repeatedly to represent the integral of each term occurred in the right-hand side of (\ref{Rkint}) by one remaining term to deduce that, when the $R_k$ integral exists, 
\begin{eqnarray}\label{rklast}
        \fl&& \kappa\!\left(R_k\right)\nonumber\\
        \fl &=&\frac{1}{2}\int_0^{\infty}x^k \left(-x^{\alpha }e^{-x}\right)\left(\frac{b_1(k)}{x}+b_2(k) x\right)(q_{m+1}(x) Q_m(-x)+q_m(x) Q_{m+1}(-x)) \dd x,
\end{eqnarray}
where the coefficients $b_1(k)$, $b_2(k)$ are in \ref{appendix:A}. Also, we deduce that the integrand $q_{m+1}(x) Q_m(-x)+ q_m(x) Q_{m+1}(-x)$ satisfies a three-term integral identity
    \begin{eqnarray}\label{intid}
        \fl&&\int_0^{\infty}x^k \left(-x^{\alpha }e^{-x}\right)\frac{1}{x}\left(q_{m+1}(x) Q_m(-x)+ q_m(x) Q_{m+1}(-x)\right)\dd x\nonumber\\
        \fl &=&\int_0^{\infty}x^k \left(-x^{\alpha }e^{-x}\right)\left(c_1(k)x+c_2(k)x^3\right)\left(q_{m+1}(x) Q_m(-x)+ q_m(x) Q_{m+1}(-x)\right)\dd x,
    \end{eqnarray}
    where the coefficients $c_1(k)$, $c_2(k)$ are in \ref{appendix:A}.
    Solving the equations for coefficients of (\ref{rklast}) from $\kappa\!\left(R_{k+2}\right)$, $\kappa\!\left(R_k\right)$, $\kappa\!\left(R_{k-2}\right)$ in terms of the integral identity (\ref{intid}) yields (\ref{specrecrel}).
\end{proof}
\section{Application to entanglement entropy}\label{App}

The recurrence relation for spectral moments in terms of a real order $k$ is the key building block to a systematic way of obtaining any order cumulant of entanglement entropy recursively. In the following, we derive a recurrence relation of the mean of linear statistics $T_k$ as in (\ref{eq:Tk}), which is recalled here,
\begin{eqnarray}
    T_k=\sum_{i=1}^m x_i^k \ln x_i
\end{eqnarray}
 over the unconstrained Bures-Hall ensemble~(\ref{eq:BHu}) from (\ref{specrecrel}), and apply the recurrence relations to reproduce the results on the mean of von Neumann entropy and quantum purity over Bures-Hall ensemble~(\ref{eq:BH})~\cite{Sarkar19,Wei20BHA,LW21}.

The von Neumann entropy of a bipartite system is defined as 
\begin{equation}\label{eq:vN}
S=-\tr\left(\rho_{A}\ln\rho_{A}\right)=-\sum_{i=1}^{m}\lambda_{i}\ln\lambda_{i}
\end{equation}
supported in $S\in\left[0,\ln{m}\right]$, which achieves the maximally-entangled state $S=\ln{m}$ when 
\begin{equation}
\lambda_{1}=\dots=\lambda_{m}=\frac{1}{m},
\end{equation} and degenerates to the separable state $S=0$ when 
\begin{equation}
\lambda_{1}=1,~~\lambda_{2}=\dots=\lambda_{m}=0.
\end{equation}
In particular, the von Neumann entropy over the unconstrained Bures-Hall ensemble (\ref{eq:BHu}) is the linear statistics 
\begin{eqnarray}
    T_1=\sum_{i=1}^{m} x_i \ln x_i
\end{eqnarray}over the unconstrained Bures-Hall ensemble (\ref{eq:BHu}). As in~\cite{Wei20BHA,Wei20BH,wei2025skewnessvonneumannentropy}, we have the relation
\begin{eqnarray}\label{momentconvS}
    \kappa(S)= \psi _0\!\left(\alpha  m+\frac{1}{2} m(m+1)+1\right)-\frac{2}{m(2\alpha+m+1)}\kappa(T_1).
\end{eqnarray}
\begin{proposition} We have recurrence relation for $T_k$ as
\begin{eqnarray}\label{Tkrecrel}
     \fl&&   g_1(k) \kappa\!\left(T_{k+2}\right)\nonumber\\
     \fl &=&g_2(k)\kappa\!\left(T_{k}\right)+g_3(k)\kappa\!\left(T_{k-2}\right)-g_1'(k)\kappa\!\left(R_{k+2}\right)+g_2'(k)\kappa\!\left(R_{k}\right)+g_3'(k)\kappa\!\left(R_{k-2}\right),
\end{eqnarray}
   where coefficients $g_1(k)$, $g_2(k)$, $g_3(k)$ are as above and in \ref{appendix:A}, and  
   \begin{eqnarray}
       g_i'(k)=\frac{\dd }{\dd k}g_i(k)
   \end{eqnarray}
   for $i=1,2,3$.
As a consequence, the mean of von Neumann entropy of Bures-Hall ensemble (\ref{eq:BH}) is
\begin{eqnarray}\label{entropy}
         \kappa(S)=\psi_0\!\left(m n-\frac{m^2}{2}+1\right)-\psi_0\!\left(n+\frac{1}{2}\right).
\end{eqnarray}
\end{proposition}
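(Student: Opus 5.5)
The plan has two parts: first get (\ref{Tkrecrel}) by differentiating (\ref{specrecrel}) in $k$, then use it to compute $\kappa(T_1)$ and hence $\kappa(S)$.

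\textbf{The recurrence for $T_k$.} Since $\frac{\dd}{\dd k}x^k=x^k\ln x$, we have $\frac{\dd}{\dd k}\kappa(R_k)=\kappa(T_k)$ wherever the integral $m\int_0^\infty x^k h_1(x)\dd x$ converges. Differentiation under the integral sign is justified by dominated convergence, because $h_1(x)\sim x^{\alpha}$ at $0$ and decays exponentially at $\infty$. Theorem~\ref{Theorem:Rec} holds for real $k$ in an open range, and its coefficients $g_i(k)$ are rational functions of $k$. So both sides of (\ref{specrecrel}) are differentiable in $k$, and the product rule gives
\begin{eqnarray*}
g_1'(k)\kappa(R_{k+2})+g_1(k)\kappa(T_{k+2})&=&g_2'(k)\kappa(R_k)+g_2(k)\kappa(T_k)\\
&&+g_3'(k)\kappa(R_{k-2})+g_3(k)\kappa(T_{k-2}).
\end{eqnarray*}
Moving $g_1'(k)\kappa(R_{k+2})$ to the right-hand side gives exactly (\ref{Tkrecrel}).

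\textbf{The entropy formula.} By (\ref{momentconvS}) we only need $\kappa(T_1)$. I would apply (\ref{Tkrecrel}) at $k=-1$, which relates $\kappa(T_1)$ to $\kappa(T_{-1})$, $\kappa(T_{-3})$ and to $\kappa(R_1)$, $\kappa(R_{-1})$, $\kappa(R_{-3})$. The inputs come from three places:
\begin{itemize}
\item[(i)] $\kappa(R_1)=d$, the mean of the trace, from $g(\theta)$.
\item[(ii)] Low-order or negative moments from (\ref{specrecrel}) together with boundary cases where $g_3$ or $g_1$ vanishes.
\item[(iii)] Negative moments must satisfy $k>-\alpha-1$, so I expect to exploit the fact that $g_3(k)$ (or the relevant coefficient) vanishes at a boundary value such as $k=-\alpha-1$ or $k=0$. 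This truncates the recurrence, and near $k=0$ one also has $\kappa(R_0)=m$ and $\kappa(T_0)=\mathbb{E}\sum\ln x_i$.
\end{itemize}
The most robust route is therefore to take $k\to 1$ in the differentiated form: write $\kappa(R_k)$ near $k=1$ through the recurrence linking $R_1$, $R_{-1}$ and $R_3$. Here $R_3$ follows from $R_1$ and $R_{-1}$ at $k=1$, and $\kappa(R_{-1})$ can be obtained from $k=-1$ of (\ref{specrecrel}) if $g_3(-1)$ contributes only convergent terms.

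\textbf{Main obstacle.} I expect the hardest step to be evaluating the needed seed quantities, particularly $\kappa(T_{-1})$ or $\kappa(T_0)$, in closed form; these produce the digamma terms. My first attempt would be to compute $\kappa(T_0)$ directly from $h_1$ via the Meijer G representations (\ref{eq:kerM}), using Mellin transforms of $q_k(x)Q_k(-x)$ with the derivative taken at $k=0$. Alternatively, I would look for a value of $k$ where $g_1$ or $g_3$ vanishes so that the differentiated recurrence isolates the needed $\kappa(T_k)$ with only $R$-terms, which are gamma-ratios, on the other side. Differentiating a gamma ratio yields $\psi_0$ terms, and after simplification one obtains
\[
\kappa(T_1)=\frac{d}{1}\left(\psi_0\!\left(n+\tfrac12\right)+\mbox{rational}\right).
\]
Substituting this into (\ref{momentconvS}) with $\alpha=n-m-\frac12$, so that $\alpha m+\frac12 m(m+1)+1=mn-\frac{m^2}{2}+1$, should give (\ref{entropy}). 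The last step is checking that the rational terms cancel.
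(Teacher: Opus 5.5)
Your derivation of (\ref{Tkrecrel}) is correct and matches the paper: differentiate (\ref{specrecrel}) in $k$ using $\frac{\dd}{\dd k}\kappa(R_k)=\kappa(T_k)$. Taking $k=-1$ in (\ref{Tkrecrel}) is also the paper's choice. Your $\kappa(R_1)=\mathbb{E}[\theta]=d$, from the Gamma law of the trace, is a valid shortcut; the paper instead recovers $\kappa(R_1)$ from the $k=-1$ case of (\ref{specrecrel}) after computing $\kappa(R_{-1})$ and $\kappa(R_{-3})$.

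\textbf{The gap.} The remaining inputs $\kappa(R_{-1})$, $\kappa(R_{-3})$, $\kappa(T_{-1})$, $\kappa(T_{-3})$ are never obtained, and none of the routes you sketch can supply them.
\begin{itemize}
\item The recurrence moves in steps of $2$, so $\kappa(T_0)$ lies on the even chain and cannot enter $\kappa(T_1)$.
\item At $k=-1$, (\ref{specrecrel}) is one equation in the two unknowns $\kappa(R_{-1})$ and $\kappa(R_{-3})$, so it does not give $\kappa(R_{-1})$.
\item At $k=1$ one has $g_3(1)=0$, so the differentiated recurrence only ties $\kappa(T_3)$ to $\kappa(T_1)$.
\item $g_3$ does not vanish at $k=0$ or $k=-\alpha-1$. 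Its linear factors vanish at $k=1,2,1-\alpha,-\alpha,\dots$, and $g_3(0)$ is displayed, nonzero, in the purity proposition.
\end{itemize}
More fundamentally, because $\kappa(R_k)$ satisfies (\ref{specrecrel}), $\kappa(T_k)+c\,\kappa(R_k)$ satisfies (\ref{Tkrecrel}) for every constant $c$. Hence no choice of $k$ and no amount of $R$-data can determine $\kappa(T_1)$. Independent $T$-input on the odd chain is unavoidable, and at $k=-1$ this means both $\kappa(T_{-1})$ and $\kappa(T_{-3})$. Your displayed $\kappa(T_1)=d\big(\psi_0(n+\frac12)+\mbox{rational}\big)$ is therefore the expected shape of the answer, not a derivation.

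\textbf{How the paper supplies the seeds.} It computes them directly.
\begin{itemize}
\item Integrating by parts against (\ref{K01deriv}) and (\ref{K10deriv}) writes $\kappa(R_k)$, for $k\neq0$, as $-\frac{1}{2k}\frac{S_{m-1}}{S_m}$ times the integral of $x^{k+\alpha}e^{-x}$ against $p_m(x)Q_{m-1}(-x)+p_{m-1}(x)Q_m(-x)+x\big(P_m(-x)q_{m-1}(x)+P_{m-1}(-x)q_m(x)\big)$.
\item Inserting the explicit sums (\ref{pqpoly}) and the Meijer-$G$ forms (\ref{eq:kerM}), and evaluating by Mellin transforms, gives $\kappa(R_{-1})$ and $\kappa(R_{-3})$ in closed form.
\item Differentiating the same computation in $k$ before setting $k=-1,-3$ gives $\kappa(T_{-1})$ and $\kappa(T_{-3})$. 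Their $\psi_0(\alpha+1)$ and $\psi_0(\alpha+m+1)$ terms are the source of the digammas.
\item Substituting into (\ref{Tk-1}), the $\psi_0(\alpha+1)$ terms cancel and one gets $\kappa(T_1)=d\,\psi_0(\alpha+m+1)$ with no rational remainder. Then (\ref{momentconvS}) gives (\ref{entropy}).
\end{itemize}
Your Mellin-transform idea is the right tool, but it must be aimed at both $k=-1$ and $k=-3$ rather than at $k=0$. These seeds converge only for $\alpha>2$. So the identity for $\kappa(T_1)$ is first obtained there and then carried to $\alpha=n-m-\frac12$ by analyticity in $\alpha$.
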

\begin{proof}
The result (\ref{Tkrecrel}) is obtained by taking derivative with respect to $k$ in (\ref{specrecrel}).
We now calculate the initial conditions. Firstly, by definition we have
\begin{eqnarray}\label{R0}
\kappa\!\left(R_0\right)=\kappa\left(\sum_{i=1}^m x_i^0\right)=m.
\end{eqnarray}
Also, we can use (\ref{K01deriv}), (\ref{K10deriv}) to calculate certain $\kappa(R_k)$ for $k\neq 0$ as initial conditions
\begin{eqnarray}
  \kappa(R_k)&=&-\frac{1}{2k} \int_0^{\infty} \frac{S_{m-1}}{S_{m}}x^{k+\alpha } e^{-x} (p_m(x) Q_{m-1}(-x)+p_{m-1}(x) Q_m(-x)\nonumber\\
  &&+x\left(P_m(-x) q_{m-1}(x)+P_{m-1}(-x) q_m(x)\right)) \dd x .
\end{eqnarray}
Insert (\ref{pqpoly}), (\ref{eq:kerM}) for the terms in integrand, where the integral of Meijer-G functions in $P_k(-x)$ and $Q_k(-x)$ can be calculated utilizing Mellin transform as in \cite{Wei20BHA,Wei20BH,wei2025skewnessvonneumannentropy}, we obtain that
\begin{subequations}
\begin{eqnarray}\label{R-1}
\!\!\!\! \!\!\!\!\!\!  \kappa\!\left(R_{-1}\right)&=&  \frac{m (2 \alpha +m+1)}{2 \alpha  (\alpha +1)}
\end{eqnarray}    
\begin{eqnarray}\label{R-2}
 \!\!\!\! \!\!\!\!\!\!  \kappa\!\left(R_{-2}\right)&=& \frac{m (2 \alpha +m+1) \left(4 \alpha ^2+4 \alpha +6 \alpha  m+3 m^2+3 m-2\right)}{4 (\alpha -1) \alpha  (\alpha +1) (\alpha +2) (2 \alpha +1)}
\end{eqnarray}
\begin{eqnarray}\label{R-3}
 \!\!\!\!\!\!\!\!\!\!    \kappa\!\left(R_{-3}\right)&=&\frac{m (\alpha +m-1) (\alpha +m) (\alpha +m+1) (\alpha +m+2) (2 \alpha +m+1)}{2 (\alpha -2) (\alpha -1) \alpha ^2 (\alpha +1)^2 (\alpha +2) (\alpha +3)}.
\end{eqnarray}
\end{subequations}
Now using (\ref{specrecrel}) for $k=-1$ and insert (\ref{R-1}), (\ref{R-3}), we have
\begin{eqnarray}\label{R1}
    \kappa\!\left(R_{1}\right)=\frac{1}{2} m (2 \alpha +m+1).
\end{eqnarray}
The initial condition $\kappa\!\left(T_0\right)$ has been recently obtained in \cite{wei2025averagerelativeentropyrandom} in terms of the constrained Bures-Hall ensemble, and similarly we have here for unconstrained Bures-Hall ensemble as 
\begin{eqnarray}
    \kappa\!\left(T_0\right) &=&- \left(\alpha +\frac{1}{2}\right) \psi_0 (\alpha +1)-(2 \alpha +1) \psi_0 (2 \alpha +1)+\left(\alpha +m+\frac{1}{2}\right)\nonumber\\ &&\times  \psi_0 (\alpha +m+1)+2 (2 \alpha +m+1) \psi_0 (2 \alpha +m+1)\nonumber \\ &&-(2 \alpha +2 m+1) \psi_0 (2 \alpha +2 m+1)-m.
\end{eqnarray}
We can take derivative with respect to $k$ in the calculation of $\kappa\!\left(R_{k}\right)$ and evaluate at certain $k$, which yields the initial conditions
\begin{subequations}
    \begin{eqnarray}\label{T-1}
        \kappa\!\left(T_{-1}\right) &=&\left(1+\frac{ 3 m(2 \alpha +m+1) }{2 \alpha  (\alpha +1)}\right)\psi_0 (\alpha +1)-\frac{(\alpha +m) (\alpha +m+1) }{\alpha  (\alpha +1)}\nonumber\\&&\times\psi_0 (\alpha +m+1)+\frac{\left(\alpha ^2+\alpha -1\right) m (2 \alpha +m+1)}{2 \alpha ^2 (\alpha +1)^2}
    \end{eqnarray}
    
    \begin{eqnarray}\label{T-2}
        \kappa\!\left(T_{-2}\right) &=&\frac{m (2 \alpha +m+1) \left(4 \alpha  (\alpha +1)+3 m(2 \alpha +m+1) -2\right)}{4 (\alpha -1) \alpha  (\alpha +1) (\alpha +2) (2 \alpha +1)}\nonumber\\&&\times (\psi_0 (\alpha +1)+2 \psi_0 (2 \alpha +1)-2 \psi_0 (2 \alpha +m+1))+N_{-2}
    \end{eqnarray}
    
    \begin{eqnarray}\label{T-3}
        \kappa\!\left(T_{-3}\right) &=&\frac{m (\alpha +m-1) (\alpha +m) (\alpha +m+1) (\alpha +m+2) (2 \alpha +m+1) }{2 (\alpha -2) (\alpha -1) \alpha ^2 (\alpha +1)^2 (\alpha +2) (\alpha +3)}\nonumber\\&& \times (3 \psi_0 (\alpha +1)-2 \psi_0 (\alpha +m+1))+N_{-3},
    \end{eqnarray}
    where \begin{eqnarray*}
      \!\!\!\!\!\!\!\!\!  N_{-2}&=&-\frac{1}{8 (\alpha -1)^2 \alpha ^2 (\alpha +1)^2 (\alpha +2)^2 (2 \alpha +1)^2}\nonumber\\&& \times(18 \alpha ^5 m^4+39 \alpha ^4 m^4+36 \alpha ^3 m^4+33 \alpha ^2 m^4-6 \alpha  m^4-12 m^4\nonumber\\&& +48 \alpha ^6 m^3+132 \alpha ^5 m^3+222 \alpha ^4 m^3+264 \alpha ^3 m^3+66 \alpha ^2 m^3-60 \alpha  m^3\nonumber\\&& -24 m^3+32 \alpha ^7 m^2+104 \alpha ^6 m^2+290 \alpha ^5 m^2+501 \alpha ^4 m^2+320 \alpha ^3 m^2\nonumber\\&& -5 \alpha ^2 m^2-50 \alpha  m^2-4 m^2+88 \alpha ^6 m+264 \alpha ^5 m+238 \alpha ^4 m\nonumber\\&& +36 \alpha ^3 m-6 \alpha ^2 m+20 \alpha  m+8 m)
    \end{eqnarray*}
    
    \begin{eqnarray*}
\!\!\!\!\!\!\!\!\! N_{-3}&=&-\frac{m (2 \alpha +m+1)}{24 (\alpha -2)^2 (\alpha -1)^2 \alpha ^3 (\alpha +1)^3 (\alpha +2)^2 (\alpha +3)^2}\nonumber\\&& \times(88 \alpha ^8+352 \alpha ^7+40 \alpha ^6-1112 \alpha ^5-608 \alpha ^4+1048 \alpha ^3+480 \alpha ^2\nonumber\\&& -288 \alpha +5 \alpha ^6 m^4+15 \alpha ^5 m^4+89 \alpha ^4 m^4+153 \alpha ^3 m^4-346 \alpha ^2 m^4\nonumber\\&& -420 \alpha  m^4+216 m^4+20 \alpha ^7 m^3+70 \alpha ^6 m^3+386 \alpha ^5 m^3+790 \alpha ^4 m^3\nonumber\\&& -1078 \alpha ^3 m^3-2372 \alpha ^2 m^3+24 \alpha  m^3+432 m^3+26 \alpha ^8 m^2+104 \alpha ^7 m^2\nonumber\\&& +603 \alpha ^6 m^2+1445 \alpha ^5 m^2-1083 \alpha ^4 m^2-4453 \alpha ^3 m^2-1022 \alpha ^2 m^2\nonumber\\&& +1428 \alpha  m^2-216 m^2+12 \alpha ^9 m+54 \alpha ^8 m+388 \alpha ^7 m+1106 \alpha ^6 m\nonumber\\&& -338 \alpha ^5 m-3484 \alpha ^4 m-1262 \alpha ^3 m+2108 \alpha ^2 m+120 \alpha  m-432 m).
    \end{eqnarray*}
\end{subequations}
    
    Let $k=-1$ in (\ref{Tkrecrel}), we have
    \begin{eqnarray}\label{Tk-1}
\fl~~~~~~  g_1(-1)\kappa\!\left(T_{1}\right)&=&g_2(-1)\kappa\!\left(T_{-1}\right)+g_3(-1)\kappa\!\left(T_{-3}\right)-g_1'(-1)\kappa\!\left(R_{1}\right)+g_2'(-1)\kappa\!\left(R_{-1}\right)\nonumber\\
\fl~~~~~  &&+g_3'(-1)\kappa\!\left(R_{-3}\right),
    \end{eqnarray}
    where
    \begin{eqnarray*}
        \fl g_1(-1)&=&96  (\alpha +m-1) (\alpha +m) (\alpha +m+1) (\alpha +m+2)\\
      \fl g_2(-1)&=&   -144 m (\alpha +m-1) (\alpha +m) (\alpha +m+1) (\alpha +m+2) (2 \alpha +m+1)\\ 
       \fl g_3(-1)&=& 48 (\alpha -2) (\alpha -1) \alpha  (\alpha +1) (\alpha +2) (\alpha +3) \left(2 \alpha  (\alpha +1)+3 m(2 \alpha +m+1) \right)\\
       \fl  g_1'(-1)&=&8 (2 (\alpha  (\alpha +1) (5 \alpha  (\alpha +1)+3)-6)+9 m^4+18 (2 \alpha +1) m^3\\ \fl &&+5 (11 \alpha  (\alpha +1)+3) m^2+(2 \alpha +1) (19 \alpha  (\alpha +1)+6) m)\\
       \fl g_2'(-1)&=& 4 (-8 \alpha ^2 (\alpha +1)^2 \left(\alpha ^2+\alpha -2\right)+27 m^6+81 (2 \alpha +1) m^5\nonumber\\ \fl &&+(366 \alpha  (\alpha +1)-45) m^4+3 (2 \alpha +1) (64 \alpha  (\alpha +1)-75) m^3\nonumber\\ \fl &&+6 (\alpha  (\alpha +1) (29 \alpha  (\alpha +1)-94)-15) m^2\nonumber\\ \fl &&+6 (2 \alpha +1) \left(\alpha  (\alpha +1) \left(\alpha ^2+\alpha -17\right)+6\right) m)\\
       \fl g_3'(-1)&=&8 (2 \alpha  (\alpha +1) (\alpha  (\alpha +1) (\alpha  (\alpha +1) (7 \alpha  (\alpha +1)+1)-156)+108)\nonumber\\ \fl &&+3 (\alpha  (\alpha +1) (\alpha  (\alpha +1) (4 \alpha  (\alpha +1)+19)-144)+36) m^2\nonumber\\ \fl &&+3 (2 \alpha +1) (\alpha  (\alpha +1) (\alpha  (\alpha +1) (4 \alpha  (\alpha +1)+19)-144)+36) m).
    \end{eqnarray*}
     Insert (\ref{T-1}), (\ref{T-3}), (\ref{R1}), (\ref{R-1}), (\ref{R-3}) into~(\ref{Tk-1}), we deduce that  \begin{eqnarray}\label{T1}
        \kappa\!\left(T_1\right) &=& \frac{1}{2} m (2 \alpha +m+1) \psi_0 (\alpha +m+1).
    \end{eqnarray}
    Perform the conversion (\ref{momentconvS}) as in \cite{Wei20BHA,Wei20BH,wei2025skewnessvonneumannentropy}, we obtain
    \begin{eqnarray}\label{Salpha}
       \kappa(S)= \psi _0\!\left(\alpha  m+\frac{1}{2} m(m+1)+1\right)-\psi _0(\alpha +m+1).
    \end{eqnarray}
      Evaluate $\alpha$ in (\ref{Salpha}) as (\ref{eq:aBH}), we have (\ref{entropy}).
\end{proof}

This reproduces the result on the mean of von Neumann entropy conjectured in \cite{Sarkar19} and first proved via summation method in \cite{Wei20BHA}.

Another application of the recurrence relation for spectral moments is the 
quantum purity, which was previously obtained in \cite{Sarkar19,Wei20BHA,LW21}. The quantum purity is defined as 
\begin{equation}\label{eq:P}
\mathrm{P}=\sum_{i=1}^{m}\lambda_{i}^{2},
\end{equation}
which is supported in $\mathrm{P} \in [1/m,1]$, and achieves maximally-entangled state and separable state when $\mathrm{P}=1/m$ and $\mathrm{P}=1$, respectively. In particular, the  purity is the linear statistics 
\begin{eqnarray}
    R_2=\sum_{i=1}^{m} x_i^2,
\end{eqnarray}over the unconstrained Bures-Hall ensemble (\ref{eq:BHu}). As in~\cite{Wei20BHA,LW21}, we have the relation
\begin{eqnarray}\label{momentconvP}
    \kappa(\mathrm{P})=\frac{\Gamma(d)}{\Gamma(d+2)}\kappa(R_2),
\end{eqnarray}where $d$ is in (\ref{eq:d}).

\begin{proposition} We have
\begin{eqnarray}\label{R0rec}
 g_1(0) \kappa\!\left(R_{2}\right) =g_2(0)\kappa\!\left(R_{0}\right)+ g_3(0)\kappa\!\left(R_{-2}\right),
\end{eqnarray}
where \begin{eqnarray*}
    \fl g_1(0) &=&6 (2 \alpha +2 m+1)^2 \left(4 \alpha  (\alpha +1)+3m(2 \alpha +m+1) -2\right)\\
      \fl g_2(0)&=& 3 m (2 \alpha +m+1) \left(4 \alpha  (\alpha +1)+5m(2 \alpha +m+1) +2\right) \nonumber\\ \fl && \times \left(4 \alpha  (\alpha +1)+3 m(2 \alpha +m+1) -2\right)\\
    \fl g_3(0)&=& 6 (\alpha -1) \alpha  (\alpha +1) (\alpha +2) (2 \alpha +1)^2 \left(4 \alpha  (\alpha +1)+5 m (2 \alpha+m +1) +2\right).
\end{eqnarray*}
 As a consequence, the mean of quantum purity of Bures-Hall ensemble (\ref{eq:BH}) is
 \begin{eqnarray}\label{purity}
 \kappa(\mathrm{P})=\frac{m^2-2 m n-4 n^2-1}{2 n\left(m^2-2 m n-2\right)} .
 \end{eqnarray}
\end{proposition}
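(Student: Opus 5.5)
The plan is to specialise the recurrence (\ref{specrecrel}) of Theorem~\ref{Theorem:Rec} to $k=0$ and then convert to the constrained ensemble. For (\ref{R0rec}) one would naively set $k=0$ in (\ref{specrecrel}). However, the proof of Theorem~\ref{Theorem:Rec} used the representation (\ref{Rkint}), which carries the factor $1/(2k)$ and so was derived for $k\neq0$. I would handle this by continuity in $k$.

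\textbf{Step 1: justify the case $k=0$.} For $-2\alpha-2<k$ small, all three moments $\kappa(R_{k-2})$, $\kappa(R_k)$, $\kappa(R_{k+2})$ are analytic in $k$. Here $\kappa(R_{k-2})$ exists near $k=0$ because $h_1(x)\sim x^{\alpha}$ near the origin and $\alpha>1$ in the relevant range; for smaller $\alpha$ one argues by analytic continuation in $\alpha$. The coefficients $g_i(k)$ are polynomials (or rational functions) in $k$. Hence (\ref{specrecrel}) extends to $k=0$ by letting $k\to0$. Equivalently, one can clear the $1/k$ factors before specialising. The explicit $g_i(0)$ are then read off from \ref{appendix:A} and factored into the stated forms. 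As a sanity check, the denominators of $g_3(0)$ should match those of (\ref{R-2}).

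\textbf{Step 2: evaluate $\kappa(R_2)$.} Insert $\kappa(R_0)=m$ from (\ref{R0}) and $\kappa(R_{-2})$ from (\ref{R-2}). The factor $(\alpha-1)\alpha(\alpha+1)(\alpha+2)(2\alpha+1)$ in $g_3(0)$ cancels the denominator of (\ref{R-2}). Then $g_3(0)\kappa(R_{-2})$ becomes
\[
\frac{3}{2}\,m(2\alpha+m+1)(2\alpha+1)\bigl(4\alpha(\alpha+1)+5m(2\alpha+m+1)+2\bigr)\bigl(4\alpha(\alpha+1)+3m(2\alpha+m+1)-2\bigr).
\]
This shares the factor $4\alpha(\alpha+1)+3m(2\alpha+m+1)-2$ with $g_1(0)$ and $g_2(0)$, so that factor divides out. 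The result is
\[
\kappa(R_2)=\frac{m(2\alpha+m+1)\bigl(4\alpha(\alpha+1)+5m(2\alpha+m+1)+2\bigr)(2\alpha+2m+1)}{4(2\alpha+2m+1)^2}\cdot(\text{simplification}).
\]
The algebra here is routine and should be checked symbolically.

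\textbf{Step 3: convert to purity.} Apply (\ref{momentconvP}) with $\Gamma(d)/\Gamma(d+2)=1/(d(d+1))$ and $d=\frac12 m(2\alpha+m+1)$. Then substitute $\alpha=n-m-\frac12$ from (\ref{eq:aBH}). With this substitution $2\alpha+m+1=2n-m$ and $2\alpha+2m+1=2n$, and simplifying should yield (\ref{purity}).

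The main obstacle is Step 1: legitimately setting $k=0$ given the $1/k$ in the derivation, and making sure $g_1(0)$ does not vanish identically.
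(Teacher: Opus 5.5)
Your proposal is correct and follows the paper's own argument: set $k=0$ in (\ref{specrecrel}), substitute $\kappa(R_0)=m$ and (\ref{R-2}), and cancel the common factor $4\alpha(\alpha+1)+3m(2\alpha+m+1)-2$ to get $\kappa(R_2)=\frac{m(2\alpha+m+1)(4\alpha^2+4\alpha+10\alpha m+5m^2+5m+2)}{4(2\alpha+2m+1)}$, then apply (\ref{momentconvP}) and $\alpha=n-m-\frac{1}{2}$. Your continuity-in-$k$ justification for $k=0$ (and analytic continuation in $\alpha$ when $\kappa(R_{-2})$ diverges) covers a point the paper passes over silently; drop the stray ``simplification'' factor in your Step 2 display, since the fraction shown already reduces to the value above.
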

\begin{proof}
Let $k=0$ in (\ref{specrecrel}), we have (\ref{R0rec}).
Insert (\ref{R0}), (\ref{R-2}) into (\ref{R0rec}), we deduce that
     \begin{eqnarray}\label{R2}
        \kappa\!\left(R_2\right)=\frac{m (2 \alpha +m+1) \left(4 \alpha ^2+4 \alpha +10 \alpha  m+5 m^2+5 m+2\right)}{4 (2 \alpha +2 m+1)}.
    \end{eqnarray}
    Perform the conversion (\ref{momentconvP}) as in \cite{Wei20BHA,LW21}, we obtain that
    \begin{eqnarray}\label{Palpha}
       \kappa(\mathrm{P})=  \frac{4 \alpha ^2+4 \alpha+10 \alpha  m +5 m^2+5 m+2}{(2 \alpha +2 m+1) \left(m^2+2 \alpha  m+m+2\right)} .
    \end{eqnarray}
    Evaluate $\alpha$ in (\ref{Palpha}) as (\ref{eq:aBH}), we obtain 
    (\ref{purity}).
\end{proof}

This reproduces the mean formula~\cite{Sarkar19,Wei20BHA,LW21} of quantum purity for Bures-Hall ensemble.

\section*{Acknowledgment} 
The work of Lu Wei was supported by the U.S. National Science Foundation (2306968) and the U.S. Department of Energy (DE-SC0024631).

\appendix
\section{Coefficients of certain expressions}\label{appendix:A}
  \begin{eqnarray}
             \fl && g_1(k)\nonumber\\ \fl &=& (k+2) (k+3) (-2 \alpha +k-2 m-1) (2 \alpha +k+2 m+1)\nonumber\\ \fl &&  \times( -2 k^3+10 k^2+k(8 \alpha ^2+8 \alpha +9 m^2+18 \alpha  m+9 m-10)\nonumber\\ \fl &&-4 \alpha ^2-4 \alpha -3 m^2-6 \alpha  m-3 m+2)
        \end{eqnarray}
            \begin{eqnarray}
                \fl && g_2(k)\nonumber\\ \fl &=&-4 k^9+4 k^8+2 k^7 (14 \alpha ^2+14 \alpha +9 m^2+18 \alpha  m+9 m+18)+2 k^6 (22 \alpha ^2+22 \alpha\nonumber\\ \fl && +39 m^2+78 \alpha  m+39 m+6)+2 k^5 (-16 \alpha ^4-32 \alpha ^3-60 \alpha ^2-44 \alpha +27 m^4+108 \alpha  m^3\nonumber\\ \fl &&+54 m^3+117 \alpha ^2 m^2+117 \alpha  m^2+21 m^2+18 \alpha ^3 m+27 \alpha ^2 m-3 \alpha  m-6 m-30)\nonumber\\ \fl &&-2 k^4 (108 \alpha ^4+216 \alpha ^3+176 \alpha ^2+68 \alpha +135 m^4+540 \alpha  m^3+270 m^3+789 \alpha ^2 m^2\nonumber\\ \fl &&+789 \alpha  m^2+249 m^2+498 \alpha ^3 m+747 \alpha ^2 m+477 \alpha  m+114 m+18)\nonumber\\ \fl &&-k^3 (64 \alpha ^6+192 \alpha ^5+408 \alpha ^4+496 \alpha ^3+172 \alpha ^2-44 \alpha +243 m^6+1458 \alpha  m^5+729 m^5\nonumber\\ \fl &&+3456 \alpha ^2 m^4+3456 \alpha  m^4+1275 m^4+4104 \alpha ^3 m^3+6156 \alpha ^2 m^3+4722 \alpha  m^3+1335 m^3\nonumber\\ \fl &&+2520 \alpha ^4 m^2+5040 \alpha ^3 m^2+6006 \alpha ^2 m^2+3486 \alpha  m^2+564 m^2+720 \alpha ^5 m+1800 \alpha ^4 m\nonumber\\ \fl &&+2964 \alpha ^3 m+2646 \alpha ^2 m+798 \alpha  m+18 m-28)-k^2 (96 \alpha ^6+288 \alpha ^5+264 \alpha ^4+48 \alpha ^3\nonumber\\ \fl &&-116 \alpha ^2-92 \alpha +405 m^6+2430 \alpha  m^5+1215 m^5+5724 \alpha ^2 m^4+5724 \alpha  m^4+1077 m^4\nonumber\\ \fl &&+6696 \alpha ^3 m^3+10044 \alpha ^2 m^3+3606 \alpha  m^3+129 m^3+4008 \alpha ^4 m^2+8016 \alpha ^3 m^2\nonumber\\ \fl &&+4158 \alpha ^2 m^2+150 \alpha  m^2-300 m^2+1104 \alpha ^5 m+2760 \alpha ^4 m+1884 \alpha ^3 m+66 \alpha ^2 m\nonumber\\ \fl &&-486 \alpha  m-162 m-20)+k (-32 \alpha ^6-96 \alpha ^5-40 \alpha ^4+80 \alpha ^3+72 \alpha ^2+16 \alpha +27 m^6\nonumber\\ \fl &&+162 \alpha  m^5+81 m^5+384 \alpha ^2 m^4+384 \alpha  m^4+141 m^4+456 \alpha ^3 m^3+684 \alpha ^2 m^3+522 \alpha  m^3\nonumber\\ \fl &&+147 m^3+240 \alpha ^4 m^2+480 \alpha ^3 m^2+660 \alpha ^2 m^2+420 \alpha  m^2+72 m^2+240 \alpha ^3 m+360 \alpha ^2 m\nonumber\\ \fl &&+144 \alpha  m+12 m)+3 m (32 \alpha ^5+80 \alpha ^4+64 \alpha ^3+16 \alpha ^2-8 \alpha +15 m^5+90 \alpha  m^4+45 m^4\nonumber\\ \fl &&+212 \alpha ^2 m^3+212 \alpha  m^3+41 m^3+248 \alpha ^3 m^2+372 \alpha ^2 m^2+138 \alpha  m^2+7 m^2+144 \alpha ^4 m\nonumber\\ \fl &&+288 \alpha ^3 m+160 \alpha ^2 m+16 \alpha  m-8 m-4)
            \end{eqnarray}
          \begin{eqnarray}
         \fl && g_3(k)\nonumber\\ \fl &=&     (k-2) (k-1) (-2 \alpha +k-1) (-\alpha +k-2) (-\alpha +k-1) (\alpha +k-1) (\alpha +k)  \nonumber\\ \fl &&  \times(2 \alpha +k+1) ( 2 k^3+2 k^2-k (8 \alpha ^2+8 \alpha +9 m^2+18 \alpha  m+9 m+6) \nonumber\\ \fl &&-3 (4 \alpha ^2+4 \alpha +5 m^2+10 \alpha  m+5 m+2))
        \end{eqnarray}
        
\begin{eqnarray}
     \fl && b_1(k)\nonumber\\ \fl &=& -\frac{1}{D_{b_1(k)}}(k-1) (-2 \alpha +k-1) (-\alpha +k-1) (\alpha +k) (2 \alpha +k+1)\nonumber \\ \fl &&\times(2 k^3 \left(2 \alpha ^2+5 \alpha +2 m^2+4 \alpha  m+5 m+3\right)\nonumber \\ \fl &&+k^2 \left(4 \alpha ^3+22 \alpha ^2+38 \alpha +4 m^3+12 \alpha  m^2+23 m^2+12 \alpha ^2 m+46 \alpha  m+39 m+20\right)\nonumber \\ \fl &&+k (8 \alpha ^3+30 \alpha ^2+40 \alpha +2 m^4+8 \alpha  m^3+14 m^3+10 \alpha ^2 m^2+40 \alpha  m^2+37 m^2+4 \alpha ^3 m\nonumber \\ \fl &&+34 \alpha ^2 m+70 \alpha  m+43 m+18)+(m+1)^2 (2 \alpha +m+2)^2),
\end{eqnarray}
where
\begin{eqnarray*}
       \fl && D_{b_1(k)}\nonumber\\ \fl&=&2 (\alpha +m+1) (\alpha +m+2)( 2 k^5 (2 \alpha +2 m+3)^2 +4 k^4 (m+1) (2 \alpha +m+2)\nonumber \\ \fl && -k^3 (32 \alpha ^4+200 \alpha ^3+472 \alpha ^2+478 \alpha +36 m^4+144 \alpha  m^3+216 m^3+212 \alpha ^2 m^2\nonumber \\ \fl && +644 \alpha  m^2+483 m^2+136 \alpha ^3 m+628 \alpha ^2 m+954 \alpha  m+477 m+178)\nonumber \\ \fl && -k^2 (m+1) (32 \alpha ^3+136 \alpha ^2+186 \alpha +18 m^3+72 \alpha  m^2+90 m^2+88 \alpha ^2 m+232 \alpha  m\nonumber \\ \fl && +149 m+82)+k(32 \alpha ^4+128 \alpha ^3+192 \alpha ^2+126 \alpha +4 m^4+16 \alpha  m^3+24 m^3+48 \alpha ^2 m^2\nonumber \\ \fl && +96 \alpha  m^2+59 m^2+64 \alpha ^3 m+192 \alpha ^2 m+190 \alpha  m+69 m+32)+(m+1)\nonumber \\ \fl && \times(8 \alpha ^3+24 \alpha ^2+26 \alpha +2 m^3+8 \alpha  m^2+10 m^2+12 \alpha ^2 m+28 \alpha  m+17 m+10)
\end{eqnarray*}

 \begin{eqnarray}
         \fl &&  b_2(k)\nonumber\\ \fl&=&-\frac{1}{D_{b_2(k)}}(k+2) (2 \alpha +k+2 m+3)(2 k^4(2 \alpha ^2+5 \alpha +2 m^2+4 \alpha  m+5 m+3)\nonumber \\ \fl && -k^3 (12 \alpha ^3+50 \alpha ^2+66 \alpha +12 m^3+36 \alpha  m^2+49 m^2+36 \alpha ^2 m+98 \alpha  m+65 m+28)\nonumber \\ \fl && +k^2(8 \alpha ^4+68 \alpha ^3+162 \alpha ^2+150 \alpha +18 m^4+72 \alpha  m^3+98 m^3+98 \alpha ^2 m^2+284 \alpha  m^2\nonumber \\ \fl && +191 m^2+52 \alpha ^3 m+254 \alpha ^2 m+362 \alpha  m+159 m+48)-k (24 \alpha ^4+116 \alpha ^3+198 \alpha ^2\nonumber \\ \fl && +142 \alpha +12 m^5+60 \alpha  m^4+84 m^4+108 \alpha ^2 m^3+324 \alpha  m^3+224 m^3+84 \alpha ^3 m^2 \nonumber \\ \fl && +424 \alpha ^2 m^2+628 \alpha  m^2+283 m^2+24 \alpha ^4 m+212 \alpha ^3 m+528 \alpha ^2 m+510 \alpha  m\nonumber \\ \fl && +167 m+36)+16 \alpha ^4+60 \alpha ^3+82 \alpha ^2+48 \alpha +4 m^5+20 \alpha  m^4+26 m^4+44 \alpha ^2 m^3\nonumber \\ \fl && +108 \alpha  m^3+66 m^3+52 \alpha ^3 m^2+182 \alpha ^2 m^2+212 \alpha  m^2+81 m^2+24 \alpha ^4 m+128 \alpha ^3 m\nonumber \\ \fl && +230 \alpha ^2 m+174 \alpha  m+47 m+10),
         \end{eqnarray}
         where
         \begin{eqnarray*}
          \fl &&     D_{b_2(k)}\nonumber\\ \fl &=&2 (\alpha +m+1) (\alpha +m+2)(-2 k^5 (2 \alpha +2 m+3)^2-4 k^4 (m+1) (2 \alpha +m+2)\nonumber \\ \fl && +k^3(32 \alpha ^4+200 \alpha ^3+472 \alpha ^2+478 \alpha +36 m^4+144 \alpha  m^3+216 m^3+212 \alpha ^2 m^2\nonumber \\ \fl && +644 \alpha  m^2+483 m^2+136 \alpha ^3 m+628 \alpha ^2 m+954 \alpha  m+477 m+178) +k^2 (m+1) \nonumber \\ \fl &&\times(32 \alpha ^3+136 \alpha ^2+186 \alpha +18 m^3+72 \alpha  m^2+90 m^2+88 \alpha ^2 m+232 \alpha  m+149 m+82)\nonumber \\ \fl && -k (32 \alpha ^4+128 \alpha ^3+192 \alpha ^2+126 \alpha +4 m^4+16 \alpha  m^3+24 m^3+48 \alpha ^2 m^2+96 \alpha  m^2\nonumber \\ \fl &&+59 m^2+64 \alpha ^3 m+192 \alpha ^2 m+190 \alpha  m+69 m+32)-(m+1) (8 \alpha ^3+24 \alpha ^2+26 \alpha \nonumber \\ \fl &&+2 m^3+8 \alpha  m^2+10 m^2+12 \alpha ^2 m+28 \alpha  m+17 m+10))
         \end{eqnarray*}
         
         \begin{eqnarray}
              \fl &&c_1(k)\nonumber\\ \fl &=&\frac{1}{D_{c_1(k)}} (4 k^{11} (2 \alpha +2 m+3)^2+8 k^{10}(24 \alpha ^2+74 \alpha +25 m^2+50 \alpha  m+75 m+56)\nonumber \\ \fl &&-2 k^9(56 \alpha ^4+296 \alpha ^3+134 \alpha ^2-948 \alpha +36 m^4+144 \alpha  m^3+216 m^3+236 \alpha ^2 m^2\nonumber \\ \fl &&+668 \alpha  m^2-25 m^2+184 \alpha ^3 m+748 \alpha ^2 m+10 \alpha  m-1047 m-954)\nonumber \\ \fl &&-2 k^8 (560 \alpha ^4+3016 \alpha ^3+5044 \alpha ^2+1778 \alpha +378 m^4+1512 \alpha  m^3+2268 m^3\nonumber \\ \fl &&+2460 \alpha ^2 m^2+6996 \alpha  m^2+3753 m^2+1896 \alpha ^3 m+7764 \alpha ^2 m+8082 \alpha  m+1053 m\nonumber \\ \fl &&-1050)-2 k^7 (-64 \alpha ^6-248 \alpha ^5+2408 \alpha ^4+14830 \alpha ^3+29456 \alpha ^2+22870 \alpha +108 m^6\nonumber \\ \fl &&+648 \alpha  m^5+972 m^5+1440 \alpha ^2 m^4+4680 \alpha  m^4+5289 m^4+1440 \alpha ^3 m^3+7920 \alpha ^2 m^3\nonumber \\ \fl &&+20076 \alpha  m^3+17154 m^3+548 \alpha ^4 m^2+5416 \alpha ^3 m^2+26755 \alpha ^2 m^2+49715 \alpha  m^2\nonumber \\ \fl &&+28497 m^2-56 \alpha ^5 m+956 \alpha ^4 m+14486 \alpha ^3 m+47611 \alpha ^2 m+56613 \alpha  m+21420 m\nonumber \\ \fl &&+5496)-2 k^6 (-512 \alpha ^6-2048 \alpha ^5+5704 \alpha ^4+45874 \alpha ^3+96580 \alpha ^2+84892 \alpha\nonumber \\ \fl && +918 m^6+5508 \alpha  m^5+8262 m^5+12186 \alpha ^2 m^4+39726 \alpha  m^4+34800 m^4\nonumber \\ \fl &&+12024 \alpha ^3 m^3+66780 \alpha ^2 m^3+129696 \alpha  m^3+84870 m^3+4424 \alpha ^4 m^2+44920 \alpha ^3 m^2\nonumber \\ \fl &&+160015 \alpha ^2 m^2+234179 \alpha  m^2+119955 m^2-512 \alpha ^5 m+7568 \alpha ^4 m+71090 \alpha ^3 m\nonumber \\ \fl &&+195697 \alpha ^2 m+221385 \alpha  m+89487 m+27234)+k^5 (256 \alpha ^8+4416 \alpha ^7+32256 \alpha ^6\nonumber \\ \fl &&+115312 \alpha ^5+198000 \alpha ^4+90880 \alpha ^3-185336 \alpha ^2-267104 \alpha +972 m^8+7776 \alpha  m^7\nonumber \\ \fl &&+11664 m^7+26460 \alpha ^2 m^6+80892 \alpha  m^6+54177 m^6+49896 \alpha ^3 m^5+233604 \alpha ^2 m^5\nonumber \\ \fl &&+318258 \alpha  m^5+120177 m^5+56736 \alpha ^4 m^4+362952 \alpha ^3 m^4+764688 \alpha ^2 m^4\nonumber \\ \fl &&+586902 \alpha  m^4+96579 m^4+39456 \alpha ^5 m^3+325584 \alpha ^4 m^3+960144 \alpha ^3 m^3\nonumber \\ \fl &&+1164216 \alpha ^2 m^3+404478 \alpha  m^3-120933 m^3+16096 \alpha ^6 m^2+166656 \alpha ^5 m^2 \nonumber \\ \fl &&+660552 \alpha ^4 m^2+1169632 \alpha ^3 m^2+731072 \alpha ^2 m^2-263998 \alpha  m^2-355132 m^2\nonumber \\ \fl &&+3392 \alpha ^7 m+44064 \alpha ^6 m+233232 \alpha ^5 m+587336 \alpha ^4 m+622960 \alpha ^3 m-48416 \alpha ^2 m\nonumber \\ \fl &&-587444 \alpha  m-315984 m-102840)+
             k^4 (1536 \alpha ^8+26752 \alpha ^7+183104 \alpha ^6+654656 \alpha ^5\nonumber \\ \fl &&+1328576 \alpha ^4+1512896 \alpha ^3+856760 \alpha ^2+134196 \alpha +6318 m^8+50544 \alpha  m^7\nonumber \\ \fl &&+75816 m^7+171504 \alpha ^2 m^6+525312 \alpha  m^6+381783 m^6+321408 \alpha ^3 m^5\nonumber \\ \fl &&+1511136 \alpha ^2 m^5+2242098 \alpha  m^5+1047843 m^5+361872 \alpha ^4 m^4+2330784 \alpha ^3 m^4\nonumber \\ \fl &&+5324988 \alpha ^2 m^4+5079846 \alpha  m^4+1686189 m^4+248256 \alpha ^5 m^3+2068128 \alpha ^4 m^3\nonumber \\ \fl &&+6521976 \alpha ^3 m^3+9617412 \alpha ^2 m^3+6517542 \alpha  m^3+1564101 m^3+99584 \alpha ^6 m^2\nonumber \\ \fl &&+1043520 \alpha ^5 m^2+4311472 \alpha ^4 m^2+8854024 \alpha ^3 m^2+9340320 \alpha ^2 m^2+4595918 \alpha  m^2\nonumber \\ \fl &&+724636 m^2+20608 \alpha ^7 m+271296 \alpha ^6 m+1437248 \alpha ^5 m+3923168 \alpha ^4 m\nonumber \\ \fl &&+5847024 \alpha ^3 m+4563856 \alpha ^2 m+1526480 \alpha  m+69690 m-41956)\nonumber \\ \fl &&+
            2 k^3 (1280 \alpha ^8+23008 \alpha ^7+161152 \alpha ^6+597576 \alpha ^5+1289880 \alpha ^4+1650074 \alpha ^3\nonumber \\ \fl &&+1206664 \alpha ^2+449834 \alpha +6696 m^8+53568 \alpha  m^7+80352 m^7+180474 \alpha ^2 m^6\nonumber \\ \fl &&+555450 \alpha  m^6+410967 m^6+332892 \alpha ^3 m^5+1582182 \alpha ^2 m^5+2402676 \alpha  m^5\nonumber \\ \fl &&+1167615 m^5+365112 \alpha ^4 m^4+2394684 \alpha ^3 m^4+5627514 \alpha ^2 m^4+5615112 \alpha  m^4\nonumber \\ \fl &&+2008008 m^4+241104 \alpha ^5 m^3+2063208 \alpha ^4 m^3+6721860 \alpha ^3 m^3+10403022 \alpha ^2 m^3\nonumber \\ \fl &&+7641144 \alpha  m^3+2127087 m^3+91888 \alpha ^6 m^2+998976 \alpha ^5 m^2+4278260 \alpha ^4 m^2\nonumber \\ \fl &&+9225572 \alpha ^3 m^2+10546601 \alpha ^2 m^2+6036963 \alpha  m^2+1341995 m^2+17888 \alpha ^7 m\nonumber \\ \fl &&+246384 \alpha ^6 m+1353400 \alpha ^5 m+3867724 \alpha ^4 m+6208378 \alpha ^3 m+5569857 \alpha ^2 m \nonumber \\ \fl &&+2557891 \alpha  m+454128 m+61758)
             +2 k^2 (2272 \alpha ^7+33888 \alpha ^6+191328 \alpha ^5\nonumber \\ \fl &&+540776 \alpha ^4+841030 \alpha ^3+727732 \alpha ^2+326608 \alpha +4374 m^8+34992 \alpha  m^7+52488 m^7\nonumber \\ \fl &&+114858 \alpha ^2 m^6+359802 \alpha  m^6+269415 m^6+199260 \alpha ^3 m^5+988038 \alpha ^2 m^5\nonumber \\ \fl &&+1547964 \alpha  m^5+771363 m^5+195216 \alpha ^4 m^4+1386732 \alpha ^3 m^4+3454704 \alpha ^2 m^4\nonumber \\ \fl &&+3609678 \alpha  m^4+1345731 m^4+105648 \alpha ^5 m^3+1044984 \alpha ^4 m^3+3766428 \alpha ^3 m^3\nonumber \\ \fl &&+6289626 \alpha ^2 m^3+4927992 \alpha  m^3+1464057 m^3+27728 \alpha ^6 m^2+400128 \alpha ^5 m^2\nonumber \\ \fl &&+2047536 \alpha ^4 m^2+5000612 \alpha ^3 m^2+6303013 \alpha ^2 m^2+3947503 \alpha  m^2+971301 m^2\nonumber \\ \fl &&+2272 \alpha ^7 m+63408 \alpha ^6 m+491184 \alpha ^5 m+1740784 \alpha ^4 m+3258206 \alpha ^3 m\nonumber \\ \fl &&+3317555 \alpha ^2 m+1730691 \alpha  m+361755 m+59106)
             -k (2816 \alpha ^8+40640 \alpha ^7\nonumber \\ \fl &&+224896 \alpha ^6+644720 \alpha ^5+1065568 \alpha ^4+1048520 \alpha ^3+600572 \alpha ^2+180960 \alpha \nonumber \\ \fl &&+1500 m^8+12000 \alpha  m^7+18000 m^7+48048 \alpha ^2 m^6+132048 \alpha  m^6+93687 m^6\nonumber \\ \fl && +120288 \alpha ^3 m^5+468720 \alpha ^2 m^5+616554 \alpha  m^5+276183 m^5+190992 \alpha ^4 m^4\nonumber \\ \fl &&+983424 \alpha ^3 m^4+1869924 \alpha ^2 m^4+1578102 \alpha  m^4+502521 m^4+186048 \alpha ^5 m^3\nonumber \\ \fl &&+1229088 \alpha ^4 m^3+3131160 \alpha ^3 m^3+3888852 \alpha ^2 m^3+2376726 \alpha  m^3+573381 m^3\nonumber \\ \fl &&+104352 \alpha ^6 m^2+871200 \alpha ^5 m^2+2875272 \alpha ^4 m^2+4841976 \alpha ^3 m^2+4422324 \alpha ^2 m^2\nonumber \\ \fl &&+2086314 \alpha  m^2+394938 m^2+29376 \alpha ^7 m+311520 \alpha ^6 m+1316112 \alpha ^5 m\nonumber \\ \fl &&+2899848 \alpha ^4 m+3626472 \alpha ^3 m+2586900 \alpha ^2 m+973332 \alpha  m+146070 m+20844)
             \nonumber \\ \fl &&-3 (512 \alpha ^8+6016 \alpha ^7+29952 \alpha ^6+82816 \alpha ^5+139488 \alpha ^4+146824 \alpha ^3+94288 \alpha ^2\nonumber \\ \fl &&+33624 \alpha +350 m^8+2800 \alpha  m^7+4200 m^7+10060 \alpha ^2 m^6+29660 \alpha  m^6\nonumber \\ \fl &&+21715 m^6+21160 \alpha ^3 m^5+92100 \alpha ^2 m^5+132630 \alpha  m^5+63135 m^5+28304 \alpha ^4 m^4\nonumber \\ \fl &&+162408 \alpha ^3 m^4+345752 \alpha ^2 m^4+324098 \alpha  m^4+112693 m^4+24416 \alpha ^5 m^3\nonumber \\ \fl &&+174256 \alpha ^4 m^3+489856 \alpha ^3 m^3+679904 \alpha ^2 m^3+466210 \alpha  m^3+126033 m^3\nonumber \\ \fl &&+13120 \alpha ^6 m^2+112608 \alpha ^5 m^2+394400 \alpha ^4 m^2+724032 \alpha ^3 m^2+736348 \alpha ^2 m^2\nonumber \\ \fl &&+393194 \alpha  m^2+85778 m^2+3968 \alpha ^7 m+40128 \alpha ^6 m+169344 \alpha ^5 m+387968 \alpha ^4 m\nonumber \\ \fl &&+522696 \alpha ^3 m+414628 \alpha ^2 m+178960 \alpha  m+32208 m+5040)),
         \end{eqnarray}
         where 
         \begin{eqnarray*}
         \fl &&  D_{c_1(k)} \nonumber\\ \fl &=& (k-1) k (-2 \alpha +k-1) (-\alpha +k-1) (k-\alpha ) (\alpha +k) (\alpha +k+1) (2 \alpha +k+1)\nonumber \\ \fl && \times (2 k^5 (2 \alpha +2 m+3)^2+4 k^4 (20 \alpha ^2+62 \alpha +21 m^2+42 \alpha  m+63 m+47)\nonumber \\ \fl &&-k^3 (32 \alpha ^4+200 \alpha ^3+152 \alpha ^2-546 \alpha +36 m^4+144 \alpha  m^3+216 m^3\nonumber \\ \fl &&+212 \alpha ^2 m^2+644 \alpha  m^2+131 m^2+136 \alpha ^3 m+628 \alpha ^2 m+250 \alpha  m-579 m-606)\nonumber \\ \fl &&-k^2 (192 \alpha ^4+1232 \alpha ^3+2328 \alpha ^2+942 \alpha +234 m^4+936 \alpha  m^3+1404 m^3\nonumber \\ \fl &&+1360 \alpha ^2 m^2+4168 \alpha  m^2+2401 m^2+848 \alpha ^3 m+3992 \alpha ^2 m+4670 \alpha  m+885 m-482)\nonumber \\ \fl &&-k (352 \alpha ^4+2400 \alpha ^3+5376 \alpha ^2+4178 \alpha +500 m^4+2000 \alpha  m^3+3000 m^3+2848 \alpha ^2 m^2\nonumber \\ \fl &&+8848 \alpha  m^2+5925 m^2+1696 \alpha ^3 m+8240 \alpha ^2 m+11394 \alpha  m+4275 m+736)\nonumber \\ \fl &&-192 \alpha ^4-1464 \alpha ^3-3656 \alpha ^2-3394 \alpha -350 m^4-1400 \alpha  m^3-2100 m^3-1940 \alpha ^2 m^2\nonumber \\ \fl &&-6140 \alpha  m^2-4355 m^2-1080 \alpha ^3 m-5500 \alpha ^2 m-8230 \alpha  m-3615 m-974)
         \end{eqnarray*}
         
         \begin{eqnarray}
            \fl &&  c_2(k)\nonumber\\ \fl &=&-\frac{1}{D_{c_2(k)}} (k+3) (k+4) (-2 \alpha +k-2 m-1) (2 \alpha +k+2 m+5)(2 k^5 (2 \alpha +2 m+3)^2\nonumber \\ \fl &&+4 k^4 (m+1) (2 \alpha +m+2)-k^3 (32 \alpha ^4+200 \alpha ^3+472 \alpha ^2+478 \alpha +36 m^4+144 \alpha  m^3\nonumber \\ \fl &&+216 m^3+212 \alpha ^2 m^2+644 \alpha  m^2+483 m^2+136 \alpha ^3 m+628 \alpha ^2 m+954 \alpha  m+477 m\nonumber \\ \fl &&+178)-k^2 (m+1) (32 \alpha ^3+136 \alpha ^2+186 \alpha +18 m^3+72 \alpha  m^2+90 m^2+88 \alpha ^2 m\nonumber \\ \fl &&+232 \alpha  m+149 m+82)+k (32 \alpha ^4+128 \alpha ^3+192 \alpha ^2+126 \alpha +4 m^4+16 \alpha  m^3+24 m^3\nonumber \\ \fl &&+48 \alpha ^2 m^2+96 \alpha  m^2+59 m^2+64 \alpha ^3 m+192 \alpha ^2 m+190 \alpha  m+69 m+32)+(m+1) \nonumber \\ \fl && \times (8 \alpha ^3+24 \alpha ^2+26 \alpha +2 m^3+8 \alpha  m^2+10 m^2+12 \alpha ^2 m+28 \alpha  m+17 m+10)),
         \end{eqnarray}
         where
         \begin{eqnarray*}
           \fl &&  D_{c_2(k)}\nonumber\\ \fl&=&   (k-1) k (-2 \alpha +k-1) (-\alpha +k-1) (k-\alpha ) (\alpha +k) (\alpha +k+1) (2 \alpha +k+1)\nonumber \\ \fl && \times (2 k^5 (2 \alpha +2 m+3)^2+4 k^4 (20 \alpha ^2+62 \alpha +21 m^2+42 \alpha  m+63 m+47)\nonumber \\ \fl &&-k^3 (32 \alpha ^4+200 \alpha ^3+152 \alpha ^2-546 \alpha +36 m^4+144 \alpha  m^3+216 m^3+212 \alpha ^2 m^2\nonumber \\ \fl &&+644 \alpha  m^2+131 m^2+136 \alpha ^3 m+628 \alpha ^2 m+250 \alpha  m-579 m-606)-k^2 (192 \alpha ^4\nonumber \\ \fl &&+1232 \alpha ^3+2328 \alpha ^2+942 \alpha +234 m^4+936 \alpha  m^3+1404 m^3+1360 \alpha ^2 m^2+4168 \alpha  m^2\nonumber \\ \fl &&+2401 m^2+848 \alpha ^3 m+3992 \alpha ^2 m+4670 \alpha  m+885 m-482)-k(352 \alpha ^4+2400 \alpha ^3\nonumber \\ \fl &&+5376 \alpha ^2+4178 \alpha +500 m^4+2000 \alpha  m^3+3000 m^3+2848 \alpha ^2 m^2+8848 \alpha  m^2\nonumber \\ \fl &&+5925 m^2+1696 \alpha ^3 m+8240 \alpha ^2 m+11394 \alpha  m+4275 m+736)-192 \alpha ^4-1464 \alpha ^3\nonumber \\ \fl &&-3656 \alpha ^2-3394 \alpha -350 m^4-1400 \alpha  m^3-2100 m^3-1940 \alpha ^2 m^2-6140 \alpha  m^2\nonumber \\ \fl &&-4355 m^2-1080 \alpha ^3 m-5500 \alpha ^2 m-8230 \alpha  m-3615 m-974)
         \end{eqnarray*}
\section*{References}
\bibliographystyle{iopart-num}
\bibliography{sn-bibliography}

\end{document}